\documentclass[journal]{IEEEtran}

\usepackage{amsmath,amsfonts,amssymb,mathtools}
\usepackage{algorithm}
\usepackage{algpseudocode}
\usepackage{array}
\usepackage{tabularx}
\usepackage{booktabs}
\usepackage{bm}
\usepackage{cite}
\usepackage{graphicx}
\usepackage{stfloats}
\usepackage{subfigure}
\usepackage{textcomp}
\usepackage{url}
\usepackage[hidelinks]{hyperref}

\newtheorem{theorem}{Theorem}
\newtheorem{Lemma}{Lemma}
\newtheorem{remark}{Remark}
\newtheorem{corollary}{Corollary}
\newtheorem{proposition}{Proposition}

\DeclareMathOperator*{\argmax}{arg\,max}

\begin{document}

\title{Spectral-NFP: Certified Low-Rank Curvature Majorization for Accelerating WMMSE}

\author{\IEEEauthorblockN{Jianhang Zhu,~\IEEEmembership{Graduate Student Member,~IEEE},
Tsung-Hui Chang,~\IEEEmembership{Fellow,~IEEE},\\ and
Kaiming Shen,~\IEEEmembership{Senior Member,~IEEE}}%
}

\maketitle

\begin{abstract}
Weighted sum-rate maximization in multicell multiple-input multiple-output (MIMO) networks is commonly addressed by the weighted minimum mean-square error (WMMSE) algorithm or fractional programming (FP), both of which, after fixing their auxiliary variables, solve a power-constrained quadratic transmit problem that requires costly dense operations for large arrays. Replacing the underlying curvature matrix with a scaled identity can avoid the matrix inverse operation and thereby reduce complexity, but it discards the curvature eigenvalue structure and yields a loose lower bound. We propose \textbf{Spectral-NFP}, which retains selected dominant curvature eigenpairs and uses a scaled identity matrix to bound the curvature on the remaining subspace from above. The retained rank thus traces a continuous path from NFP to the exact WMMSE transmit update. With the surrogate curvature fixed, Spectral-NFP can be interpreted as Euclidean projected-gradient ascent after a linear coordinate transformation, admitting Nesterov-type acceleration. We derive a lower bound on the one-step transmit-objective gain of Spectral-NFP relative to WMMSE, expressed in terms of the curvature eigenvalues. Under an idealized Wishart model, we analyze this bound in both finite dimensions and the large-system limit, obtaining an asymptotic rank-selection rule. Experimental results show that retaining at most 20\% of the transmit dimension, and often less than 10\%, achieves more than 99\% of the WMMSE WSR. In large-array settings, the measured update time is below 20\% of that required by WMMSE.
\end{abstract}

\begin{IEEEkeywords}
fractional programming (FP), minorization--maximization (MM), multiple-input multiple-output (MIMO), Nesterov acceleration, weighted sum-rate (WSR) maximization, weighted minimum mean-square error (WMMSE).
\end{IEEEkeywords}

\section{Introduction}\label{introduction}

\IEEEPARstart{C}{oordinated} multicell multiple-input multiple-output (MIMO) beamforming seeks to maximize the weighted sum rate (WSR) while respecting per-base-station power budgets and suppressing intra-cell and inter-cell interference. Standard weighted minimum mean-square error (WMMSE) and fractional programming (FP) methods address this nonconvex problem through model-driven block updates. In each iteration, their final transmit-beamformer update solves a power-constrained quadratic problem. A direct implementation repeatedly inverts a full-dimensional shifted curvature matrix during the bisection search for the power multiplier, making this update costly for large arrays. One way to reduce this cost is to optimize a quadratic lower bound with a tractable curvature matrix. Nonhomogeneous FP (NFP) adopts this approach by using a scaled identity matrix for the surrogate curvature, which turns the transmit update into a closed-form Euclidean projection and eliminates matrix inversion and bisection. However, replacing the full curvature matrix in this way discards the differences among its eigenvalues and can yield a loose lower bound. We propose \emph{Spectral-NFP} to overcome this limitation. Its surrogate curvature matrix retains selected dominant curvature eigenpairs and uses a scaled identity matrix to bound the curvature on the remaining subspace from above. As the retained rank increases, the curvature bound becomes tighter and the update moves from NFP toward the exact WMMSE transmit update, providing direct control over the accuracy--complexity tradeoff.

Two classical lines of work address linear precoding optimization. The first exploits special objectives or constraint structures to obtain tractable convex or conic formulations. Examples include conic-optimization designs for fixed receivers \cite{WieselEtAl2006Conic}, SINR-constrained downlink beamforming \cite{SchubertBoche2004SINR}, and transmitter optimization under per-antenna constraints \cite{YuLan2007PerAntenna}; their solution structures and multicell extensions have also been studied extensively \cite{BjornsonEtAl2014Beamforming,BjornsonEtAl2010Cooperative,DahroujYu2010Coordinated}. Their convex or conic formulations do not extend directly to general multicell WSR maximization, where every beamformer changes both the desired signals and the interference \cite{Goldsmith2005Wireless,LuoZhang2008DynamicSpectrum}. The second category therefore uses global optimization, such as monotonic optimization and branch-and-bound, to search directly over the resulting nonconvex WSR problem. Although these methods can compute globally optimal solutions for certain formulations, their computational complexity generally limits their practical use to small- or moderate-sized problem instances \cite{JorswieckLarsson2010Monotonic,JoshiEtAl2011BranchBound,LiuEtAl2012GlobalWSR}. Large coordinated arrays consequently rely primarily on scalable stationary-point methods.

WMMSE is the most established model-driven framework for this purpose. The rate--MSE relation was first exploited for MIMO broadcast-channel WSR design \cite{ChristensenEtAl2008WMMSE}, and related MMSE formulations were subsequently developed for the MIMO X channel \cite{AgustinVidal2011MIMOX}, the MIMO interfering broadcast channel \cite{ShiEtAl2011WMMSE}, and cross-layer cellular optimization \cite{BalighEtAl2014WMMSE}. FP provides a complementary auxiliary-variable framework for continuous beamforming and mixed discrete-continuous scheduling problems \cite{ShenYu2018FPPartI,ShenYu2018FPPartII}, with matrix FP extending the same principle to matrix-valued MIMO ratios \cite{ShenEtAl2019MatrixFP}. For the WSR problem considered here, fixing the linear receive filters and auxiliary MSE weight matrices reduces both WMMSE and FP to the same concave quadratic transmit problem. The exact per-cell update involves a curvature matrix whose dimension equals the number of transmit antennas and a power multiplier selected through bisection to satisfy the per-base-station power constraint. In a direct implementation, each bisection step requires the inversion of a shifted full-dimensional curvature matrix. An implementation based on eigenvalue decomposition (EVD) replaces these repeated matrix inversions with a full eigendecomposition of the curvature matrix, thereby avoiding matrix inversion during the bisection search. Nevertheless, it still requires decomposing the full-dimensional curvature matrix.

Several approaches reduce the cost of WSR beamforming by simplifying different parts of the problem. Zero forcing and regularized zero forcing (RZF) prescribe the beamforming structure and avoid iterative WSR optimization \cite{GaoEtAl2011VeryLarge,NguyenEtAl2019RZF}; polynomial and Neumann-series methods approximate the required matrix inverse \cite{KammounEtAl2014Polynomial,PrabhuEtAl2013ApproxInverse}; and coordinated clustering reduces complexity by jointly optimizing only a subset of base stations \cite{HongEtAl2013Clustering}. Manifold optimization instead represents the power-constrained beamformers as Riemannian submanifolds and applies gradient-based updates without large matrix inversions \cite{SunEtAl2024Manifold}. Reduced WMMSE proves that the stationary transmit solution under a sum-power constraint lies in a lower-dimensional subspace generated by the channel matrices and performs the WMMSE updates in that subspace, reducing the antenna-dependent complexity to linear scaling. Its current formulation, however, is restricted to the single-cell setting \cite{ZhaoEtAl2023RethinkingWMMSE}.

The closest model-driven starting point is the nonhomogeneous FP/FastFP construction \cite{ShenEtAl2024AcceleratingQT,ChenEtAl2025FastFPISAC}. After the auxiliary variables are fixed, the transmit block becomes a concave quadratic maximization problem whose negative quadratic term is governed by a positive-semidefinite curvature matrix. Because this objective is maximized, a tractable minorizing surrogate can be constructed by upper-bounding the curvature matrix: the upper bound on the curvature becomes a lower bound on the negative quadratic objective. FastFP chooses the largest eigenvalue of the curvature matrix multiplied by the identity as this upper bound. This choice converts the exact matrix update into an inverse-free residual step followed by a Euclidean projection onto the power constraint. The resulting scalar update also admits a projected-gradient interpretation and can therefore be combined with Nesterov extrapolation \cite{ShenEtAl2024AcceleratingQT,ChenEtAl2025FastFPISAC,Nesterov2018Lectures}. DeepFP follows a data-driven extension of this idea: it unfolds the FastFP iterations and learns more aggressive user-specific scalar stepsizes from channel data instead of fixing every stepsize through the largest curvature eigenvalue \cite{ZhuEtAl2026DeepFP}. Nevertheless, both the model-driven scalar choice in FastFP and the learned scalar choices in DeepFP assign a single curvature value to all transmit directions within each update. Nesterov extrapolation changes the point at which the scalar update is evaluated, while DeepFP changes the scalar stepsize itself; neither restores the directional curvature information discarded by the scaled-identity bound. Consequently, the resulting quadratic surrogate may remain substantially more conservative than the full WMMSE transmit block when the dominant curvature is concentrated in only a few directions.

Spectral-NFP fills this gap by constructing a low-rank-plus-scaled-identity curvature majorizer. It retains a controllable number of leading eigenpairs of each per-cell curvature matrix and bounds all remaining eigenvalues from above by the largest one among them. Retaining more leading eigenpairs progressively tightens the quadratic lower bound and moves the update from NFP toward the exact WMMSE transmit update. By treating only a small number of dominant curvature directions separately, Spectral-NFP obtains a substantially tighter lower bound than NFP without requiring a full-dimensional matrix inverse. The retained rank therefore directly controls the tradeoff between update quality and computational cost, while the per-base-station power constraint is satisfied through a scalar bisection search without matrix inversion inside the search.

Our theory characterizes how the number of retained eigenpairs affects the ratio of the one-step surrogate gain of Spectral-NFP to the one-step gain of the exact WMMSE transmit update. We first derive a deterministic lower bound on this gain ratio in terms of the eigenvalues of the curvature matrix. Under an idealized Wishart curvature model, the classical joint density of the ordered Wishart eigenvalues yields a finite-dimensional lower bound on the probability of attaining a target gain ratio \cite{Muirhead1982Multivariate}. In the large-system limit, the Marchenko--Pastur law and extreme-eigenvalue results further yield an asymptotic lower bound on the gain ratio and an explicit rule for selecting the rank fraction \cite{MarchenkoPastur1967,TulinoVerdu2004RMT,BaiYin1993SmallestEigenvalue}. When the surrogate curvature is fixed, Spectral-NFP also admits a projected-gradient interpretation after a linear coordinate transformation, which motivates applying Nesterov acceleration. In geometry-based experiments with 32 to 512 transmit antennas, Spectral-NFP with retained ranks of 8 and 16 achieves at least 99.35\% and 99.92\% of the WMMSE WSR, respectively. For 256 and 512 transmit antennas, both ranks require less than 20\% of the WMMSE update time.

\begin{figure*}[t]
\centering
\begin{equation*}
f_q(\underline{\bm V},\underline{\bm\Gamma},\underline{\bm Y})
=\sum_{\ell=1}^{L}\sum_{k=1}^{K}
\left[
\operatorname{tr}\left(
2\Re\{\bm V_{\ell k}^H\bm\Lambda_{\ell k}\}
-w_{\ell k}\bm Y_{\ell k}^H\bm D_{\ell k}\bm Y_{\ell k}
(\bm I_d+\bm\Gamma_{\ell k})
\right)
+w_{\ell k}\log|\bm I_d+\bm\Gamma_{\ell k}|
-\operatorname{tr}(w_{\ell k}\bm\Gamma_{\ell k})
\right].
\tag{7}\label{eq:fp-quadratic-objective}
\end{equation*}
\hrulefill
\end{figure*}

The main contributions are summarized as follows.

\begin{enumerate}
\def\labelenumi{\arabic{enumi}.}
\item
  \textbf{Spectral-NFP algorithm.} We propose Spectral-NFP for multicell MIMO WSR maximization. The algorithm retains a selectable number of leading eigenpairs of the WMMSE/FP transmit curvature and bounds the remaining eigenvalues from above by the largest discarded eigenvalue. The number of retained eigenpairs directly controls the progression from NFP toward the exact WMMSE/FP transmit update.
\item
  \textbf{Gain-ratio analysis.} We derive a lower bound on the ratio between the one-step improvement in the quadratic transmit objective achieved by Spectral-NFP and that achieved by the exact WMMSE/FP transmit update. The lower bound is expressed in terms of the eigenvalues of the curvature matrix and increases monotonically as more eigenpairs are retained, showing when a low-rank Spectral-NFP update can approach the gain of WMMSE.
\item
  \textbf{Wishart-model analysis and rank selection.} Under an idealized Wishart model for the curvature matrix, we derive a finite-dimensional probability bound for attaining a target gain ratio. In the large-system limit, the random gain-ratio lower bound converges to a deterministic expression characterized by the limiting eigenvalue distribution. We then use this deterministic expression to derive an explicit criterion for selecting the retained rank fraction.
\end{enumerate}

The rest of this paper is organized as follows. Section II introduces the multicell MIMO model and WSR problem. Section III derives the common WMMSE transmit subproblem and reviews NFP. Section IV develops Spectral-NFP, establishes its projected-gradient interpretation after a linear coordinate transformation, and analyzes its computational complexity. Section V presents the rank-dependent performance analysis. Section VI reports the numerical results, and Section VII concludes the paper.

Here and throughout, bold lower-case letters represent vectors while bold upper-case letters represent matrices. For a vector $\bm{a}$, $\bm{a}^H$ is its conjugate transpose. For a matrix $\bm{A}$, $\bm{A}^H$ is its conjugate transpose and $\|\bm{A}\|_F$ is its Frobenius norm.  $\text{col}(\bm{A})$ refers to the number of columns in matrix $A$. For a square matrix $\bm{A}$, $\text{tr}(\bm{A})$ is its trace, $|\bm{A}|$ is its determinant, and $\lambda_{\max}(\bm{A})$ is its largest eigenvalue. Denote by $\bm{I}_d$ the $d \times d$ identity matrix, $\mathbb{C}^n$ the set of $n \times 1$ vectors, $\mathbb{C}^{d \times n}$ the set of $d \times n$ matrices, and $\mathbb{H}_+^{d \times d}$ the set of $d \times d$ positive definite matrices. For a complex number $\bm{a} \in \mathbb{C}$, $\Re\{\bm{a}\}$ is its real part, $|\bm{a}|$ is its absolute value. The underlined letters represent the collections of the associated vectors or matrices, e.g., for $\bm{a}_1, \dots, \bm{a}_n \in \mathbb{C}^d$ we write $\underline{\bm{a}}  = [\bm{a}_1, \bm{a}_2, \dots, \bm{a}_n]^\top \in \mathbb{C}^{n\times d}$.

\section{Weighted Sum-Rate Maximization Problem}\label{system-model-and-problem-formulation}

Consider a downlink multi-user multiple-input-multiple-output (MU-MIMO) system with $L$ cells. Within each cell, one base station (BS) with $N_t$ transmit antennas serves $K$ users. The $k$th user in the $\ell$th cell is indexed as $(\ell,k)$. Each user $(\ell,k)$ has $N_r$ receive antennas, and $d$ data streams are intended for it. Let $\bm V_{\ell k}\in\mathbb C^{N_t\times d}$ represent the beamforming matrix used by BS $\ell$ associated with the signal $\bm s_{\ell k}\in\mathbb C^{d\times1}$ for user $(\ell,k)$. Assuming that $\mathbb E[\bm s_{\ell k}\bm s_{\ell k}^H]=\bm I_d$, the received signal $\bm y_{\ell k}$ at user $(\ell,k)$ is given by
\begin{equation}
\begin{aligned}
\bm y_{\ell k}
={}&\underbrace{\bm H_{\ell k,\ell}\bm V_{\ell k}\bm s_{\ell k}}_{\text{desired signal}}
+\underbrace{\sum_{j=1,j\ne k}^{K}\bm H_{\ell k,\ell}\bm V_{\ell j}\bm s_{\ell j}}_{\text{intracell interference}}\\
&+\underbrace{\sum_{i=1,i\ne\ell}^{L}\sum_{j=1}^{K}\bm H_{\ell k,i}\bm V_{ij}\bm s_{ij}}_{\text{intercell interference}}
+\bm n_{\ell k},
\end{aligned}
\end{equation}
where the channel state information (CSI) $\bm H_{\ell k,i}\in\mathbb C^{N_r\times N_t}$ is the channel from BS $i$ to user $(\ell,k)$, and $\bm n_{\ell k}\sim\mathcal{CN}(\mathbf0,\sigma^2\bm I_{N_r})$ is the additive white Gaussian noise with variance $\sigma^2$. The achievable data rate for user $(\ell,k)$ can be computed as~\cite{Goldsmith2005Wireless}
\begin{equation}
R_{\ell k}
=\log\left|\bm I_d
+\bm V_{\ell k}^H\bm H_{\ell k,\ell}^H
\bm F_{\ell k}^{-1}
\bm H_{\ell k,\ell}\bm V_{\ell k}\right|,
\end{equation}
where
\begin{multline}
\bm F_{\ell k}
=\sum_{j=1,j\ne k}^{K}
\bm H_{\ell k,\ell}\bm V_{\ell j}\bm V_{\ell j}^H
\bm H_{\ell k,\ell}^H\\
+\sum_{i=1,i\ne\ell}^{L}\sum_{j=1}^{K}
\bm H_{\ell k,i}\bm V_{ij}\bm V_{ij}^H
\bm H_{\ell k,i}^H
+\sigma^2\bm I_{N_r}.
\label{eq:interference-covariance}
\end{multline}

We seek the optimal transmit beamformers $\underline{\bm V}$ to maximize the weighted sum rates:
\begin{equation}
\begin{aligned}
\max_{\underline{\bm V}}\quad
&f_o(\underline{\bm V})
:=\sum_{\ell=1}^{L}\sum_{k=1}^{K}w_{\ell k}R_{\ell k}\\
\text{s.t.}\quad
&\sum_{k=1}^{K}\operatorname{tr}
(\bm V_{\ell k}\bm V_{\ell k}^H)
\le P_\ell,
\quad \ell=1,2,\ldots,L,
\end{aligned}
\label{eq:wsr-problem}
\end{equation}
where the nonnegative weight $w_{\ell k}\ge0$ reflects the priority of user $(\ell,k)$, and the constant $P_\ell$ is the power budget of BS $\ell$.

Denote the power-feasible beamformer set by
\begin{equation*}
\mathcal C
\triangleq
\left\{
\underline{\bm V}:
\sum_{k=1}^{K}\|\bm V_{\ell k}\|_F^2\le P_\ell,
\quad \ell=1,\ldots,L
\right\}.
\end{equation*}

The principal matrices and variables are summarized in Table~\ref{tab:dimensions}.

\begin{table}[t]
\renewcommand{\arraystretch}{1.5}
\footnotesize
\centering
\caption{Principal matrices and variables.}
\label{tab:dimensions}
\begin{tabularx}{\columnwidth}{|>{\centering\arraybackslash}p{0.25\columnwidth}||>{\raggedright\arraybackslash}X|}
\hline
\textbf{Variable} & \textbf{Meaning} \\ \hline
\hline
$\bm H_{\ell k,i}$ & Channel from BS $i$ to user $(\ell,k)$ \\ \hline
$\bm V_{\ell k}$ & Transmit beamforming matrix for user $(\ell,k)$ \\ \hline
$\bm\Lambda_{\ell k}$ & Linear-term coefficient in the FP transmit objective \\ \hline
$\bm B_{\ell k}^{(r)}$ & Linear-term coefficient in the Spectral-NFP surrogate \\ \hline
$\bm F_{\ell k}$ & Interference-plus-noise covariance matrix \\ \hline
$\bm D_{\ell k}$ & Total receive covariance matrix \\ \hline
$\bm Y_{\ell k}$ & Auxiliary variable introduced by the quadratic transform \\ \hline
$\bm\Gamma_{\ell k}$ & Auxiliary variable introduced by the Lagrangian dual transform \\ \hline
$\bm Z_{\ell k}$ & Auxiliary variable introduced by the nonhomogeneous bound \\ \hline
$\bm L_\ell$ & Curvature matrix of the FP transmit objective \\ \hline
$\bm K_{\ell,r}$ & Spectral curvature upper bound for $\bm L_\ell$ \\ \hline
$\bm U_{\ell,r}$ & Retained curvature eigenvectors \\ \hline
$\eta_\ell$ & Cell-wise power multiplier \\ \hline
\end{tabularx}
\end{table}

\section{WMMSE Transmit Subproblem and NFP}\label{fpwmmse-transmit-subproblem-and-nfp}

\subsection{FP and WMMSE Transmit Update}\label{quadratic-transmit-subproblem}

\begin{figure*}[!b]
\centering
\hrulefill
\begin{multline*}
f_n(\underline{\bm V},\underline{\bm\Gamma},\underline{\bm Y},\underline{\bm Z})
=\sum_{\ell=1}^{L}\sum_{k=1}^{K}\Big[
\operatorname{tr}\Big(
2\operatorname{Re}\big\{
\bm V_{\ell k}^H\bm\Lambda_{\ell k}
+\bm V_{\ell k}^H(\lambda_\ell\bm I_{N_t}-\bm L_\ell)\bm Z_{\ell k}
\big\}
+\bm Z_{\ell k}^H(\bm L_\ell-\lambda_\ell\bm I_{N_t})\bm Z_{\ell k}
-\lambda_\ell\bm V_{\ell k}^H\bm V_{\ell k}
\Big)\\
-\operatorname{tr}\Big(
w_{\ell k}\sigma^2(\bm I_d+\bm\Gamma_{\ell k})
\bm Y_{\ell k}^H\bm Y_{\ell k}
\Big)
+w_{\ell k}\log|\bm I_d+\bm\Gamma_{\ell k}|
-\operatorname{tr}(w_{\ell k}\bm\Gamma_{\ell k})
\Big].
\tag{17}\label{eq:nfp-objective}
\end{multline*}
\end{figure*}

By the Lagrangian dual transform~\cite{ShenYu2018FPPartI}, the original objective $f_o(\underline{\bm V})$ is converted to
\begin{align}
f_r(\underline{\bm V},\underline{\bm\Gamma})
&=\sum_{\ell=1}^{L}\sum_{k=1}^{K}w_{\ell k}
\Big[\log|\bm I_d+\bm\Gamma_{\ell k}|
-\text{tr}(\bm\Gamma_{\ell k})\notag\\
&\quad+\text{tr}\big((\bm I_d+\bm\Gamma_{\ell k})
\bm V_{\ell k}^H\bm H_{\ell k,\ell}^H
\bm D_{\ell k}^{-1}\bm H_{\ell k,\ell}
\bm V_{\ell k}\big)\Big],
\end{align}
where
\begin{equation}
\bm D_{\ell k}
=\sum_{i=1}^{L}\sum_{j=1}^{K}
\bm H_{\ell k,i}\bm V_{ij}\bm V_{ij}^H
\bm H_{\ell k,i}^H
+\sigma^2\bm I_{N_r}.
\label{eq:total-receive-covariance}
\end{equation}
The FP method then applies the quadratic transform~\cite{ShenYu2018FPPartI,ShenEtAl2019MatrixFP} to further recast $f_o(\underline{\bm V})$ into the objective $f_q(\underline{\bm V},\underline{\bm\Gamma},\underline{\bm Y})$ given in \eqref{eq:fp-quadratic-objective}. The new objective is separately concave in $\underline{\bm V}$, $\underline{\bm\Gamma}$, and $\underline{\bm Y}$, so the FP algorithm allows iteratively optimizing these variables as
\setcounter{equation}{7}
\begin{align}
\bm Y_{\ell k}
&=\bm D_{\ell k}^{-1}
\bm H_{\ell k,\ell}\bm V_{\ell k},
\label{eq:fp-y-update}\\
\bm\Gamma_{\ell k}
&=\bm V_{\ell k}^H
\bm H_{\ell k,\ell}^H
\bm F_{\ell k}^{-1}
\bm H_{\ell k,\ell}\bm V_{\ell k}.
\label{eq:fp-gamma-update}
\end{align}
With $\underline{\bm\Gamma}$ and $\underline{\bm Y}$ fixed, the terms in $f_q$ that depend on the transmit beamformers are
\begin{equation}
Q(\underline{\bm V})
=\sum_{\ell=1}^{L}\sum_{k=1}^{K}
\left[
2\operatorname{Re}\operatorname{tr}
(\bm V_{\ell k}^H\bm\Lambda_{\ell k})
-\operatorname{tr}
(\bm V_{\ell k}^H\bm L_\ell\bm V_{\ell k})
\right],
\end{equation}
where
\begin{align}
\bm\Lambda_{\ell k}
&=w_{\ell k}
\bm H_{\ell k,\ell}^H
\bm Y_{\ell k}(\bm I_d+\bm\Gamma_{\ell k}),
\label{eq:transmit-linear-term}\\
\bm L_\ell
&=\sum_{i=1}^{L}\sum_{j=1}^{K}
w_{ij}
\bm H_{ij,\ell}^H
\bm Y_{ij}(\bm I_d+\bm\Gamma_{ij})\bm Y_{ij}^H
\bm H_{ij,\ell}.
\label{eq:transmit-curvature}
\end{align}
With the auxiliary variables fixed, the transmit beamformers are obtained by solving
\begin{equation}
\begin{aligned}
\max_{\underline{\bm V}}
\quad&
\sum_{\ell=1}^{L}\sum_{k=1}^{K}
\left[
2\operatorname{Re}\operatorname{tr}
(\bm V_{\ell k}^H\bm\Lambda_{\ell k})
-\operatorname{tr}
(\bm V_{\ell k}^H\bm L_\ell\bm V_{\ell k})
\right]\\
\text{s.t.}\quad&
\sum_{k=1}^{K}\|\bm V_{\ell k}\|_F^2\le P_\ell,
\quad \ell=1,\ldots,L.
\end{aligned}
\label{eq:quadratic-transmit-problem}
\end{equation}
Since the power constraints are imposed separately at the BSs, the optimal transmit update is
\begin{equation}
\bm V_{\ell k}
=(\eta_\ell\bm I_{N_t}+\bm L_\ell)^{-1}
\bm\Lambda_{\ell k},
\end{equation}
where the Lagrange multiplier $\eta_\ell$ for the power constraint is computed as
\begin{equation*}
\eta_\ell
=\min\left\{\eta\ge0:
\sum_{k=1}^{K}
\left\|(\eta\bm I_{N_t}+\bm L_\ell)^{-1}
\bm\Lambda_{\ell k}\right\|_F^2
\le P_\ell\right\}.
\end{equation*}
When $\eta_\ell=0$ and $\bm L_\ell$ is singular, the inverse in the exact
update is interpreted as the Moore--Penrose pseudoinverse, which gives the
minimum-norm solution on the range of $\bm L_\ell$.
The WMMSE algorithm~\cite{ShiEtAl2011WMMSE,ChristensenEtAl2008WMMSE} gives the same auxiliary-variable and transmit updates for the WSR problem considered here.

\subsection{NFP as Isotropic Curvature Majorization}\label{nfp-as-isotropic-curvature-majorization}

The exact FP/WMMSE transmit update in (14) requires an inverse of the $N_t\times N_t$ curvature matrix and a bisection search for the power multiplier. NFP~\cite{ShenEtAl2024AcceleratingQT,ChenEtAl2025FastFPISAC} reduces this cost by replacing the original quadratic transmit objective with a quadratic lower-bound surrogate whose curvature matrix is a scaled identity matrix. This construction rests on the nonhomogeneous bound below.

\begin{Lemma}[Nonhomogeneous Bound~\cite{SunBabuPalomar2017MM}]
\label{lem:nonhomogeneous-bound}
Suppose that two Hermitian matrices \(\bm L,\bm K\in\mathbb H^{m\times m}\) satisfy \(\bm L\preceq\bm K\). Then, for any two compatible matrices \(\bm X\) and \(\bm Z\), one has
\begin{multline}
\operatorname{tr}(\bm X^H\bm L\bm X)
\le
\operatorname{tr}\left(
\bm X^H\bm K\bm X
+2\operatorname{Re}\{\bm X^H(\bm L-\bm K)\bm Z\}
\right.\\
\left.
+\bm Z^H(\bm K-\bm L)\bm Z
\right),
\label{eq:nonhomogeneous-bound}
\end{multline}
where equality holds if \(\bm Z=\bm X\).
\end{Lemma}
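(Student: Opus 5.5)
The plan is to write the difference between the right- and left-hand sides of \eqref{eq:nonhomogeneous-bound} as a single trace of a positive-semidefinite quadratic form. Set $\bm M\triangleq\bm K-\bm L$, which is Hermitian and satisfies $\bm M\succeq\bm 0$ by the hypothesis $\bm L\preceq\bm K$. Subtracting the left-hand side $\operatorname{tr}(\bm X^H\bm L\bm X)$ from the right-hand side leaves
\begin{equation*}
\Delta\triangleq\operatorname{tr}\big(\bm X^H\bm M\bm X-2\operatorname{Re}\{\bm X^H\bm M\bm Z\}+\bm Z^H\bm M\bm Z\big).
\end{equation*}
The whole proof then reduces to showing that $\Delta\ge 0$.

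The key step is to recognize $\Delta$ as the expansion of $\operatorname{tr}\big((\bm X-\bm Z)^H\bm M(\bm X-\bm Z)\big)$. Expanding this product gives the four terms
\begin{equation*}
\operatorname{tr}(\bm X^H\bm M\bm X)-\operatorname{tr}(\bm X^H\bm M\bm Z)-\operatorname{tr}(\bm Z^H\bm M\bm X)+\operatorname{tr}(\bm Z^H\bm M\bm Z).
\end{equation*}
Because $\bm M$ is Hermitian, the second and third traces are complex conjugates of each other. Their sum is therefore $2\operatorname{Re}\operatorname{tr}(\bm X^H\bm M\bm Z)$, and this matches the cross term in $\Delta$ once $\operatorname{Re}$ and $\operatorname{tr}$ are interchanged.

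Positivity then follows from positive semidefiniteness. Write $\bm W=\bm X-\bm Z$ with columns $\bm w_i$. Then
\begin{equation*}
\operatorname{tr}(\bm W^H\bm M\bm W)=\sum_i\bm w_i^H\bm M\bm w_i\ge 0,
\end{equation*}
since $\bm M\succeq\bm 0$. This gives $\Delta\ge 0$, which is exactly the inequality in \eqref{eq:nonhomogeneous-bound}.

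The equality case is immediate: when $\bm Z=\bm X$ we have $\bm W=\bm 0$, so $\Delta=0$. No step is a real obstacle. The only care needed is the bookkeeping of the Hermitian cross terms, namely checking that $\operatorname{tr}(\bm Z^H\bm M\bm X)=\overline{\operatorname{tr}(\bm X^H\bm M\bm Z)}$, which uses $\bm M^H=\bm M$.
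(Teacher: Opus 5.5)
Your proof is correct. The two sides differ by exactly $\operatorname{tr}\big((\bm X-\bm Z)^H(\bm K-\bm L)(\bm X-\bm Z)\big)\ge 0$, which is the standard argument for this lemma; the paper itself gives no proof and cites \cite{SunBabuPalomar2017MM}. The paper relies on this same completed-square identity in the proofs of Proposition~\ref{prop:rank-monotone-gain} and Theorem~\ref{thm:relative-certified-gain}, where it writes the surrogate gain as the true gain minus $\sum_k\|\bm V_{\ell k}-\bm V_{\ell k}^{(t)}\|_{\bm K_{\ell,r}-\bm L_\ell}^2$.
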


To simplify the quadratic transmit problem \eqref{eq:quadratic-transmit-problem}, NFP applies Lemma~\ref{lem:nonhomogeneous-bound} to each quadratic term \(\operatorname{tr}(\bm V_{\ell k}^{H}\bm L_\ell\bm V_{\ell k})\), with \(\bm X=\bm V_{\ell k}\), \(\bm Z=\bm Z_{\ell k}\), \(\bm L=\bm L_\ell\), and \(\bm K=\lambda_\ell\bm I_{N_t}\), where
\begin{equation}
\lambda_\ell=\lambda_{\max}(\bm L_\ell).
\label{eq:nfp-lambda}
\end{equation}
Substituting the resulting lower bounds into \(f_q\) gives \(f_n\) in \eqref{eq:nfp-objective}. When the other variables are held fixed, the bound is tight at
\setcounter{equation}{17}
\begin{equation}
\bm Z_{\ell k}=\bm V_{\ell k}.
\label{eq:nfp-z-update}
\end{equation}
Because the surrogate curvature is a scaled identity matrix, maximizing \eqref{eq:nfp-objective} under the per-BS power constraint reduces to a residual step followed by a Euclidean projection onto the power ball. Thus, each \(\bm V_{\ell k}\) at iteration \(t\) is updated from the previous iterate as
\begin{equation}
\bm V_{\ell k}^{(t)}=
\begin{cases}
\widehat{\bm V}_{\ell k},
&\text{if }\displaystyle\sum_{j=1}^{K}
\|\widehat{\bm V}_{\ell j}\|_F^2\le P_\ell,\\[1.2ex]
\displaystyle
\sqrt{\frac{P_\ell}{\sum_{j=1}^{K}
\|\widehat{\bm V}_{\ell j}\|_F^2}}
\widehat{\bm V}_{\ell k},
&\text{otherwise},
\end{cases}
\label{eq:nfp-v-update}
\end{equation}
where
\begin{equation}
\widehat{\bm V}_{\ell k}
=\bm V_{\ell k}^{(t-1)}
+\frac{1}{\lambda_\ell}
\left(
\bm\Lambda_{\ell k}
-\bm L_\ell\bm V_{\ell k}^{(t-1)}
\right).
\label{eq:nfp-vhat}
\end{equation}
Algorithm~\ref{alg:nfp} summarizes the NFP method.

\begin{algorithm}[t]
\caption{NFP for Multicell MIMO Beamforming}
\label{alg:nfp}
\begin{algorithmic}[1]
\State \textbf{Input:} The current CSI.
\State Initialize \(\underline{\bm V}\) to feasible values under the power constraint.
\Repeat
  \State Update each \(\bm Z_{\ell k}\) by \eqref{eq:nfp-z-update}.
  \State Update each \(\bm Y_{\ell k}\) by \eqref{eq:fp-y-update}.
  \State Update each \(\bm\Gamma_{\ell k}\) by \eqref{eq:fp-gamma-update}.
  \State Update each \(\bm V_{\ell k}\) by \eqref{eq:nfp-v-update}.
\Until{the objective value converges.}
\State \textbf{Output:} Final beamforming matrix \(\underline{\bm V}\).
\end{algorithmic}
\end{algorithm}

\begin{remark}
\label{remark:nfp-scaled-identity}
The computational simplicity of NFP comes from replacing every eigenvalue of \(\bm L_\ell\) by the single value \(\lambda_{\max}(\bm L_\ell)\). This replacement removes the differences among the eigenvalues of the full curvature matrix. Consequently, when only a few eigenvalues are close to the largest one, the NFP lower bound can be unnecessarily loose in the remaining eigen-directions. Spectral-NFP overcomes this limitation by preserving selected dominant eigenpairs and applying a scaled-identity bound only to the residual subspace.
\end{remark}

\section{Proposed Spectral-NFP}\label{proposed-spectral-nfp}

NFP simplifies the transmit update by replacing the curvature matrix \(\bm L_\ell\) with the scaled identity \(\lambda_{\ell,1}\bm I_{N_t}\). Although this choice eliminates matrix inversion and bisection, it assigns the largest eigenvalue to every eigendirection and may therefore produce a loose quadratic lower bound. This motivates seeking a curvature upper bound that is closer to \(\bm L_\ell\) while retaining a tractable power-constrained update. Spectral-NFP achieves this objective by preserving a selected number of dominant eigenpairs of \(\bm L_\ell\) and upper-bounding only the remaining eigenvalues.

\subsection{\texorpdfstring{Rank-\(r\) Spectral Majorizer}{Rank-r Spectral Majorizer}}\label{rank-r-spectral-majorizer}

The retained rank can take any value in $r\in\{0,\ldots,N_t-1\}$. Consider the eigendecomposition
\begin{equation}
\bm L_\ell
=\bm U_\ell
\operatorname{diag}
(\lambda_{\ell,1},\ldots,\lambda_{\ell,N_t})
\bm U_\ell^H,
\end{equation}
where \(\lambda_{\ell,1}\ge\lambda_{\ell,2}\ge\cdots\ge
\lambda_{\ell,N_t}\ge0\). By definition,
\(\lambda_{\ell,1}=\lambda_{\max}(\bm L_\ell)=\lambda_\ell\), where
\(\lambda_\ell\) is the scalar curvature used by NFP. For
\(0<r\le N_t-1\), define
\begin{equation}
\bm U_{\ell,r}
=[\bm u_{\ell,1},\ldots,\bm u_{\ell,r}],
\qquad
\bm\Sigma_{\ell,r}
=\operatorname{diag}(\lambda_{\ell,1},\ldots,\lambda_{\ell,r}),
\end{equation}
For every admissible rank, the largest discarded eigenvalue is
\begin{equation}
\beta_{\ell,r}=\lambda_{\ell,r+1}.
\label{eq:spectral-beta}
\end{equation}

For \(r=0\), the retained eigenvector and eigenvalue blocks are absent, and
\(\beta_{\ell,0}=\lambda_{\ell,1}\). Spectral-NFP introduces the following
curvature upper bound:
\begin{equation}
\bm K_{\ell,r}
=\bm U_\ell
\operatorname{diag}\Bigl(
\lambda_{\ell,1},\ldots,\lambda_{\ell,r},
\underbrace{\beta_{\ell,r},\ldots,\beta_{\ell,r}}_{N_t-r}
\Bigr)
\bm U_\ell^H.
\label{eq:spectral-majorizer}
\end{equation}
Thus, \(\bm K_{\ell,r}\) has the same eigenvectors as
\(\bm L_\ell\). It preserves the leading \(r\) eigenvalues
\(\lambda_{\ell,1},\ldots,\lambda_{\ell,r}\) and replaces every remaining
eigenvalue by the common residual value \(\beta_{\ell,r}\). At \(r=0\), this
definition gives \(\bm K_{\ell,0}=\lambda_{\ell,1}\bm I_{N_t}\). At
\(r=N_t-1\), the residual subspace contains only the last eigenvector and is
assigned \(\beta_{\ell,N_t-1}=\lambda_{\ell,N_t}\); hence
\(\bm K_{\ell,N_t-1}=\bm L_\ell\).

\begin{proposition}[Spectral Majorization]
\label{prop:spectral-majorization}
For \(0\le r<N_t-1\),
\begin{equation}
\bm L_\ell=\bm K_{\ell,N_t-1}
\preceq\bm K_{\ell,r+1}
\preceq\bm K_{\ell,r}
\preceq\bm K_{\ell,0}
=\lambda_{\ell,1}\bm I_{N_t}.
\label{eq:spectral-majorization}
\end{equation}

\end{proposition}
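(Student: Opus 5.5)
Since all matrices $\bm K_{\ell,r}$, $r=0,\ldots,N_t-1$, share the eigenvector matrix $\bm U_\ell$ with $\bm L_\ell$, the plan is to reduce every Loewner comparison in \eqref{eq:spectral-majorization} to an entrywise comparison of diagonal matrices. For two Hermitian matrices $\bm U_\ell\bm D_1\bm U_\ell^H$ and $\bm U_\ell\bm D_2\bm U_\ell^H$ with $\bm U_\ell$ unitary and $\bm D_1,\bm D_2$ real diagonal, the difference is $\bm U_\ell(\bm D_2-\bm D_1)\bm U_\ell^H$. This difference is positive semidefinite exactly when every diagonal entry of $\bm D_2-\bm D_1$ is nonnegative, because a unitary congruence preserves eigenvalues. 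I will state this observation first and then only compare eigenvalue lists.

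\textbf{Endpoints.} Both were already noted after \eqref{eq:spectral-majorizer}. At $r=N_t-1$ the list is $(\lambda_{\ell,1},\ldots,\lambda_{\ell,N_t})$, so $\bm K_{\ell,N_t-1}=\bm L_\ell$. At $r=0$ the list is $N_t$ copies of $\beta_{\ell,0}=\lambda_{\ell,1}$, and $\bm U_\ell\bm U_\ell^H=\bm I_{N_t}$ gives $\bm K_{\ell,0}=\lambda_{\ell,1}\bm I_{N_t}$.

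\textbf{Core step: $\bm K_{\ell,r+1}\preceq\bm K_{\ell,r}$ for each $0\le r<N_t-1$.} I compare the two eigenvalue lists position by position.
\begin{itemize}
\item For indices $i\le r$, both lists contain $\lambda_{\ell,i}$, so the difference is zero.
\item At index $i=r+1$, $\bm K_{\ell,r}$ has $\beta_{\ell,r}=\lambda_{\ell,r+1}$ and $\bm K_{\ell,r+1}$ has $\lambda_{\ell,r+1}$, so the difference is again zero.
\item For indices $i\ge r+2$, $\bm K_{\ell,r}$ has $\lambda_{\ell,r+1}$ and $\bm K_{\ell,r+1}$ has $\lambda_{\ell,r+2}$. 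The difference $\lambda_{\ell,r+1}-\lambda_{\ell,r+2}$ is nonnegative by the descending ordering of the eigenvalues.
\end{itemize}
Hence $\bm K_{\ell,r}-\bm K_{\ell,r+1}=(\lambda_{\ell,r+1}-\lambda_{\ell,r+2})\,\bm U_\ell\bm P_{\ell,r}\bm U_\ell^H\succeq\bm 0$. Here $\bm P_{\ell,r}$ denotes the diagonal selector of indices $r+2,\ldots,N_t$; it is introduced only inside the proof.

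\textbf{Assembling the chain.} Transitivity of $\preceq$ applied to the core step for $r,r+1,\ldots,N_t-2$ gives $\bm L_\ell=\bm K_{\ell,N_t-1}\preceq\bm K_{\ell,r+1}$. Applying it for $0,\ldots,r-1$ gives $\bm K_{\ell,r}\preceq\bm K_{\ell,0}$. Together these yield the full chain.

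There is no real obstacle here. The only point needing care is the $r=0$ convention, where the retained blocks are absent. It fits the same index-by-index comparison, since the first case is simply empty. A second point of care is ties among eigenvalues: the eigenvector matrix $\bm U_\ell$ is then not unique, but a single fixed choice is used for all $r$, and the argument only requires that it be the same across $r$. As a consequence, every $\bm K_{\ell,r}$ is a valid $\bm K$ in Lemma~\ref{lem:nonhomogeneous-bound}.
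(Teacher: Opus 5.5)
Your proof is correct and follows essentially the same route as the paper: you diagonalize everything in the common eigenbasis $\bm U_\ell$ and compare eigenvalue lists entrywise, with the key step being that raising the rank replaces the residual value $\lambda_{\ell,r+1}$ by $\lambda_{\ell,r+2}\le\lambda_{\ell,r+1}$. The only cosmetic difference is that the paper verifies $\bm L_\ell\preceq\bm K_{\ell,r+1}$ and $\bm K_{\ell,r}\preceq\lambda_{\ell,1}\bm I_{N_t}$ directly by eigenvalue comparison, whereas you obtain them by chaining the one-step inequality through transitivity.
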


\begin{IEEEproof}
All matrices in \eqref{eq:spectral-majorization} are diagonal in the eigenbasis \(\bm U_\ell\). For rank \(r\), the first \(r\) diagonal entries of \(\bm K_{\ell,r}\) are \(\lambda_{\ell,1},\ldots,\lambda_{\ell,r}\), while every remaining entry is \(\lambda_{\ell,r+1}\). Increasing the rank to \(r+1\) preserves the first \(r+1\) eigenvalues and replaces the residual value \(\lambda_{\ell,r+1}\) by the no-larger value \(\lambda_{\ell,r+2}\). Hence \(\bm K_{\ell,r+1}\preceq\bm K_{\ell,r}\). Moreover, each residual eigenvalue of \(\bm L_\ell\) satisfies \(\lambda_{\ell,a}\le\lambda_{\ell,r+2}\) for \(a>r+1\), so \(\bm L_\ell\preceq\bm K_{\ell,r+1}\). Finally, every eigenvalue of \(\bm K_{\ell,r}\) is at most \(\lambda_{\ell,1}\). The endpoint identities follow directly from \(\bm K_{\ell,0}=\lambda_{\ell,1}\bm I_{N_t}\) and \(\beta_{\ell,N_t-1}=\lambda_{\ell,N_t}\), which gives \(\bm K_{\ell,N_t-1}=\bm L_\ell\).
\end{IEEEproof}

For \(0<r\le N_t-1\), partition
\(\bm U_\ell=[\bm U_{\ell,r}\ \bm U_{\ell,r}^{\perp}]\).
By the completeness relation of the orthonormal eigenbasis,
\(\bm U_{\ell,r}\bm U_{\ell,r}^H
+\bm U_{\ell,r}^{\perp}(\bm U_{\ell,r}^{\perp})^H
=\bm I_{N_t}\), or equivalently,
\(\bm U_{\ell,r}^{\perp}(\bm U_{\ell,r}^{\perp})^H
=\bm I_{N_t}-\bm U_{\ell,r}\bm U_{\ell,r}^H\),
\eqref{eq:spectral-majorizer} becomes
\begin{equation}
\bm K_{\ell,r}
=\beta_{\ell,r}\bm I_{N_t}
+\bm U_{\ell,r}
(\bm\Sigma_{\ell,r}-\beta_{\ell,r}\bm I_r)
\bm U_{\ell,r}^H.
\label{eq:spectral-majorizer-low-rank}
\end{equation}

\subsection{Spectral-NFP Transmit Update}\label{spectral-nfp-transmit-update}

Applying \eqref{eq:nonhomogeneous-bound} to the quadratic terms of
\(f_q\), with \(\bm K=\bm K_{\ell,r}\) and
\(\bm Z_{\ell k}=\bm V_{\ell k}^{(t)}\), gives the
Spectral-NFP surrogate objective
\begin{equation}
\begin{aligned}
f_{s,r}\!\left(
\underline{\bm V}\mid\underline{\bm V}^{(t)}
\right)
\triangleq{}&
\sum_{\ell=1}^{L}\sum_{k=1}^{K}
2\operatorname{Re}\operatorname{tr}
(\bm V_{\ell k}^H\bm B_{\ell k}^{(r)})\\
&-
\sum_{\ell=1}^{L}\sum_{k=1}^{K}
\operatorname{tr}
(\bm V_{\ell k}^H\bm K_{\ell,r}\bm V_{\ell k})
+\text{const}.
\end{aligned}
\label{eq:spectral-surrogate-objective}
\end{equation}
where
\begin{equation}
\bm B_{\ell k}^{(r)}
\triangleq\bm\Lambda_{\ell k}
+(\bm K_{\ell,r}-\bm L_\ell)
\bm V_{\ell k}^{(t)}.
\label{eq:spectral-linear-term}
\end{equation}

Since
\(\bm L_\ell\preceq\bm K_{\ell,r}
\preceq\lambda_{\ell,1}\bm I_{N_t}\), the nonhomogeneous
bound gives, with the auxiliary variables fixed at iteration \(t\),
\begin{equation}
\begin{aligned}
&f_n\!\left(
\underline{\bm V},
\underline{\bm\Gamma}^{(t)},
\underline{\bm Y}^{(t)},
\underline{\bm V}^{(t)}
\right)\\
&\qquad\le
f_{s,r}\!\left(
\underline{\bm V}\mid\underline{\bm V}^{(t)}
\right)
\le
f_q\!\left(
\underline{\bm V},
\underline{\bm\Gamma}^{(t)},
\underline{\bm Y}^{(t)}
\right).
\end{aligned}
\label{eq:nfp-spectral-fq-ordering}
\end{equation}
All three functions are equal at
\(\underline{\bm V}=\underline{\bm V}^{(t)}\). Therefore,
Spectral-NFP provides a tighter lower bound than NFP while remaining a
lower bound on \(f_q\), as illustrated in Fig.~\ref{fig:spectral-surrogate-geometry}.

\begin{figure}[t]
\centering
\includegraphics[width=\columnwidth]{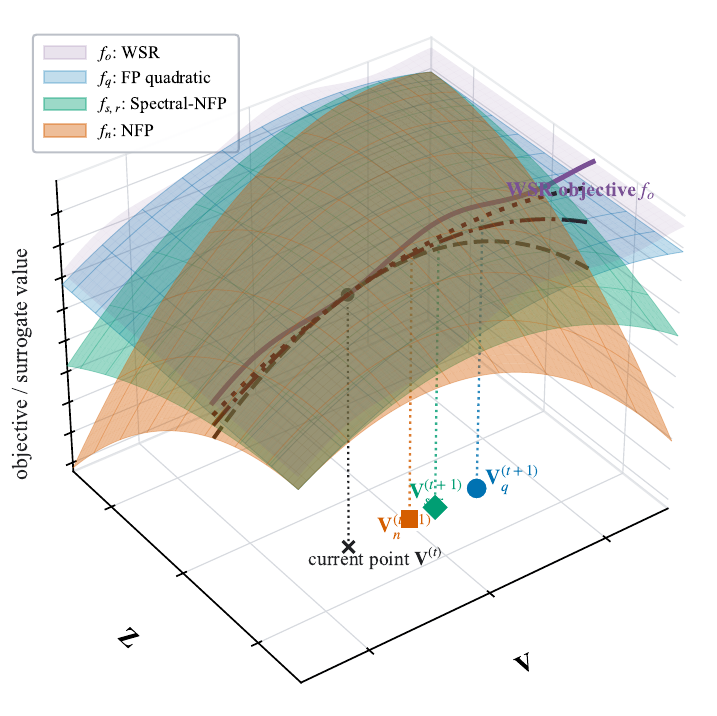}
\caption{Three-dimensional illustration of surrogate tightness. The horizontal
axes are scalar projections of \(\underline{\bm V}\) and
\(\underline{\bm Z}\). With the auxiliary variables fixed, the surfaces
satisfy \(f_n\le f_{s,r}\le f_q\le f_o\) and touch at the current beamformer.
The markers show the corresponding updates on
\(\underline{\bm Z}=\underline{\bm V}^{(t)}\).}
\label{fig:spectral-surrogate-geometry}
\end{figure}

Maximizing \eqref{eq:spectral-surrogate-objective} under the per-BS power
constraints, the KKT condition for BS \(\ell\), with multiplier
\(\eta_\ell\ge0\), is
\begin{equation}
(\bm K_{\ell,r}+\eta_\ell\bm I_{N_t})
\bm V_{\ell k}^{(t+1)}
=\bm B_{\ell k}^{(r)},
\label{eq:spectral-kkt-system}
\end{equation}
and hence
\begin{equation}
\bm V_{\ell k}^{(t+1)}
=(\bm K_{\ell,r}+\eta_\ell\bm I_{N_t})^{-1}
\bm B_{\ell k}^{(r)}.
\label{eq:spectral-kkt-update}
\end{equation}

For \(0<r\le N_t-1\) and \(\beta_{\ell,r}+\eta_\ell>0\), directly inverting
the diagonal entries in \eqref{eq:spectral-majorizer} gives
\begin{multline*}
(\bm K_{\ell,r}+\eta_\ell\bm I_{N_t})^{-1}\\
=
\bm U_\ell
\begin{bmatrix}
(\bm\Sigma_{\ell,r}+\eta_\ell\bm I_r)^{-1}
&\mathbf0\\
\mathbf0
&(\beta_{\ell,r}+\eta_\ell)^{-1}\bm I_{N_t-r}
\end{bmatrix}
\bm U_\ell^H.
\end{multline*}
As in the derivation of \eqref{eq:spectral-majorizer-low-rank}, block-diagonal
multiplication and the completeness relation of the eigenbasis then yield
\begin{equation}
\begin{aligned}
(\bm K_{\ell,r}+\eta_\ell\bm I_{N_t})^{-1}
&=
\bm U_{\ell,r}
(\bm\Sigma_{\ell,r}+\eta_\ell\bm I_r)^{-1}
\bm U_{\ell,r}^H\\
&\quad+
\frac{1}{\beta_{\ell,r}+\eta_\ell}
(\bm I_{N_t}-\bm U_{\ell,r}\bm U_{\ell,r}^H).
\end{aligned}
\label{eq:spectral-low-rank-inverse}
\end{equation}

Substituting \eqref{eq:spectral-low-rank-inverse} into \eqref{eq:spectral-kkt-update} yields
\begin{align}
&\bm V_{\ell k}^{(t+1)}
=
\frac{\bm B_{\ell k}^{(r)}}
{\beta_{\ell,r}+\eta_\ell}\,+\notag\\
&\;\bm U_{\ell,r}
\operatorname{diag}\!\left\{
(\lambda_{\ell,a}+\eta_\ell)^{-1}
-(\beta_{\ell,r}+\eta_\ell)^{-1}
\right\}_{a=1}^{r}
\bm U_{\ell,r}^H\bm B_{\ell k}^{(r)}.
\label{eq:spectral-low-rank-update}
\end{align}

After the eigenpairs have been obtained, \eqref{eq:spectral-low-rank-update} uses only projections onto the retained \(r\)-dimensional subspace. At \(r=0\), \eqref{eq:spectral-surrogate-objective} is the NFP surrogate and its optimizer is \eqref{eq:nfp-v-update}. At \(r=N_t-1\), \(\bm K_{\ell,N_t-1}=\bm L_\ell\) and \(\bm B_{\ell k}^{(N_t-1)}=\bm\Lambda_{\ell k}\), so \eqref{eq:spectral-kkt-update} becomes the exact FP/WMMSE transmit update.

From \eqref{eq:spectral-low-rank-update}, the total transmit power of BS
\(\ell\) is the scalar function
\begin{multline}
p_\ell(\eta)
=
\sum_{k=1}^{K}
\Biggl[
\sum_{a=1}^{r}
\frac{\|\bm u_{\ell,a}^H\bm B_{\ell k}^{(r)}\|_2^2}
{(\lambda_{\ell,a}+\eta)^2}\\
+
\frac{\|(\bm I_{N_t}-\bm U_{\ell,r}\bm U_{\ell,r}^H)
\bm B_{\ell k}^{(r)}\|_F^2}
{(\beta_{\ell,r}+\eta)^2}
\Biggr].
\label{eq:spectral-power-equation}
\end{multline}
It is continuous and nonincreasing for \(\eta\ge0\). Hence,
\(\eta_\ell=0\) if \(p_\ell(0)\le P_\ell\); otherwise,
\(\eta_\ell\) is determined from \(p_\ell(\eta_\ell)=P_\ell\) by scalar
bisection.
When \(\eta_\ell=0\) and \(\bm K_{\ell,r}\) is singular, the spectral
inverse and the zero-eigenvalue terms in \(p_\ell(0)\) are interpreted in the
same Moore--Penrose sense; if \(\lambda_\ell=0\) in NFP, the current feasible
beamformers are retained.

\begin{algorithm}[t]
\caption{Spectral-NFP}
\label{alg:spectral-nfp}
\begin{algorithmic}[1]
\State \textbf{Input:} The current CSI and retained rank
\(r\in\{1,\ldots,N_t-1\}\).
\State Initialize \(\underline{\bm V}\) to feasible values under the power constraint.
\Repeat
  \State Update each \(\bm Z_{\ell k}\) by \eqref{eq:nfp-z-update}.
  \State Update each \(\bm Y_{\ell k}\) by \eqref{eq:fp-y-update}.
  \State Update each \(\bm\Gamma_{\ell k}\) by \eqref{eq:fp-gamma-update}.
  \State Construct each \(\bm K_{\ell,r}\) by \eqref{eq:spectral-majorizer}.
  \State Update each \(\bm B_{\ell k}^{(r)}\) by \eqref{eq:spectral-linear-term}.
  \State Update each \(\bm V_{\ell k}\) by
  \eqref{eq:spectral-low-rank-update}.
\Until{the objective value converges.}
\State \textbf{Output:} Final beamforming matrix \(\underline{\bm V}\).
\end{algorithmic}
\end{algorithm}

\subsection{Projected Gradient in Transformed Coordinates}\label{transformed-coordinate-projected-gradient-view}

We next show that, after a linear coordinate transformation, the Spectral-NFP update is exactly a projected-gradient-ascent step on the transmit-beamformer block of \(f_q\). With the auxiliary variables and the spectral majorizers fixed, assume \(\bm K_{\ell,r}\succ\mathbf0\) for every cell and define the bijective transformation
\begin{equation*}
\widetilde{\bm V}_{\ell k}
=\bm K_{\ell,r}^{1/2}\bm V_{\ell k},
\qquad
\bm V_{\ell k}
=\bm K_{\ell,r}^{-1/2}\widetilde{\bm V}_{\ell k}.
\end{equation*}
The original power constraints become
\begin{equation*}
\widetilde{\mathcal C}_{r}
\triangleq
\left\{\underline{\widetilde{\bm V}}:
\sum_{k=1}^{K}\operatorname{tr}\left(
\widetilde{\bm V}_{\ell k}^H
\bm K_{\ell,r}^{-1}
\widetilde{\bm V}_{\ell k}\right)
\le P_\ell,
\quad \ell=1,\ldots,L
\right\}.
\end{equation*}
Define the transformed quadratic transmit objective as
\begin{equation*}
\begin{aligned}
\widetilde f_q(\underline{\widetilde{\bm V}})
={}&\sum_{\ell=1}^{L}\sum_{k=1}^{K}
2\operatorname{Re}\operatorname{tr}\left(
\widetilde{\bm V}_{\ell k}^{H}
\bm K_{\ell,r}^{-1/2}\bm\Lambda_{\ell k}
\right)\\
&-\sum_{\ell=1}^{L}\sum_{k=1}^{K}
\operatorname{tr}\left(
\widetilde{\bm V}_{\ell k}^{H}
\bm K_{\ell,r}^{-1/2}\bm L_\ell
\bm K_{\ell,r}^{-1/2}
\widetilde{\bm V}_{\ell k}
\right)
+\text{const}.
\end{aligned}
\end{equation*}

Differentiating \(\widetilde f_q\) gives
\begin{align*}
\left.
\frac{\partial \widetilde f_q}
{\partial\widetilde{\bm V}_{\ell k}^{*}}
\right|_{\underline{\widetilde{\bm V}}=\underline{\widetilde{\bm V}}^{(t)}}
&=\bm K_{\ell,r}^{-1/2}
\left(
\bm\Lambda_{\ell k}
-\bm L_\ell\bm V_{\ell k}^{(t)}
\right).
\end{align*}
Therefore, the projected-gradient update is
\begin{equation*}
\begin{aligned}
\widetilde{\bm V}_{\ell k}^{(t+1)}
&=\mathcal P_{\widetilde{\mathcal C}_{r}}
\left(
\widetilde{\bm V}_{\ell k}^{(t)}
+\left.
\frac{\partial \widetilde f_q}
{\partial\widetilde{\bm V}_{\ell k}^{*}}
\right|_{\underline{\widetilde{\bm V}}=\underline{\widetilde{\bm V}}^{(t)}}
\right)\\
&=\mathcal P_{\widetilde{\mathcal C}_{r}}
\left(
\bm K_{\ell,r}^{-1/2}
\left(
\bm K_{\ell,r}\bm V_{\ell k}^{(t)}
+\bm\Lambda_{\ell k}
-\bm L_\ell\bm V_{\ell k}^{(t)}
\right)
\right)\\
&=\mathcal P_{\widetilde{\mathcal C}_{r}}
\left(
\bm K_{\ell,r}^{-1/2}\bm B_{\ell k}^{(r)}
\right),
\end{aligned}
\end{equation*}
where \(\mathcal P_{\widetilde{\mathcal C}_{r}}(\cdot)\) denotes Euclidean projection onto \(\widetilde{\mathcal C}_{r}\), applied jointly to the beamformers subject to each cell power constraint. The second equality substitutes the gradient above, and the third follows from \eqref{eq:spectral-linear-term}. The projection separates across cells, and its KKT condition for cell \(\ell\) is
\begin{equation*}
\left(\bm I_{N_t}+\eta_\ell\bm K_{\ell,r}^{-1}\right)
\widetilde{\bm V}_{\ell k}^{(t+1)}
=\bm K_{\ell,r}^{-1/2}\bm B_{\ell k}^{(r)}.
\end{equation*}
Substituting \(\widetilde{\bm V}_{\ell k}^{(t+1)}=\bm K_{\ell,r}^{1/2}\bm V_{\ell k}^{(t+1)}\) and multiplying by \(\bm K_{\ell,r}^{1/2}\) gives
\begin{equation*}
(\bm K_{\ell,r}+\eta_\ell\bm I_{N_t})
\bm V_{\ell k}^{(t+1)}
=\bm B_{\ell k}^{(r)},
\end{equation*}
which is exactly the Spectral-NFP KKT system in \eqref{eq:spectral-kkt-system}. Therefore, maximizing the Spectral-NFP surrogate \(f_{s,r}\) is equivalent to applying Euclidean projected-gradient ascent to the transformed version of \(f_q\). Equivalently, Spectral-NFP is a \(\bm K_{\ell,r}\)-metric projected-gradient method for \(f_q\) in the original coordinates. At \(r=0\), \(\bm K_{\ell,0}=\lambda_{\ell,1}\bm I_{N_t}\), so the ellipsoid reduces to a Euclidean ball and the standard NFP projection is recovered.

\subsection{Nesterov Acceleration}\label{nesterov-acceleration}

Following Nesterov's extrapolation strategy \cite{Nesterov2018Lectures}, Spectral-NFP forms
\begin{equation}
\overline{\bm V}_{\ell k}^{(t)}
=\bm V_{\ell k}^{(t)}
+\theta_t\left(
\bm V_{\ell k}^{(t)}-\bm V_{\ell k}^{(t-1)}
\right).
\end{equation}
The auxiliary variables and curvature matrices are evaluated at the extrapolated beamformers, giving the gradient step
\begin{equation*}
\widehat{\bm V}_{\ell k}^{(t)}
=\overline{\bm V}_{\ell k}^{(t)}
+(\bm K_{\ell,r}^{(t)})^{-1}
\left(
\bm\Lambda_{\ell k}^{(t)}
-\bm L_\ell^{(t)}\overline{\bm V}_{\ell k}^{(t)}
\right).
\end{equation*}
The next iterate is obtained using the same power-constrained update as in Section IV-B, with the current beamformers replaced by the extrapolated ones. The extrapolated beamformers are not projected onto the power-feasible set beforehand.
The extrapolation coefficient can be selected as \cite{Nesterov2018Lectures}
\begin{equation}
\theta_t=
\max\left\{
\frac{t-2}{t+1},0
\right\},
\qquad t\ge1,
\end{equation}
with \(\underline{\bm V}^{(-1)}=\underline{\bm V}^{(0)}\).

\subsection{Computational Complexity}\label{beamformer-update-complexity}

We compare the method-specific complexity of the per-cell transmit update after the common auxiliary variables and the curvature matrix \(\bm L_\ell\) have been obtained. The following analysis focuses on the dominant dependence on the number of transmit antennas \(N_t\) and the retained rank \(r\), while treating the numbers of users and data streams as fixed.

\subsubsection{Spectral-NFP Complexity}
For \(1\le r\le N_t-1\), Spectral-NFP requires the leading \(r\) eigenvectors and the leading \(r+1\) eigenvalues of \(\bm L_\ell\), where the \((r+1)\)th eigenvalue determines the curvature assigned to the residual subspace.

Consider, for example, a block power or subspace-iteration method with a fixed number of iterations. For a general dense \(N_t\times N_t\) curvature matrix, multiplying the matrix by a block of \(r+1\) vectors has a leading-order complexity of \(\mathcal O(N_t^2r)\). The associated orthogonalization cost is \(\mathcal O(N_tr^2)\), which is lower order when \(r\ll N_t\).

After the required spectral components have been computed, the power-constrained beamformer update requires \(\mathcal O(KN_tdr)\) operations for the retained-subspace projections and beamformer reconstruction. The power multiplier is determined by scalar bisection. After the spectral energies have been computed, each bisection step evaluates a scalar power equation containing \(r+1\) terms and therefore costs \(\mathcal O(r+1)\), without requiring a matrix inversion. Treating the numbers of subspace iterations and bisection steps as fixed, the resulting complexity is
\begin{equation}
C_{\mathrm S}(r)
=\mathcal O\!\left(N_t^2r+KN_tdr\right).
\label{eq:spectral-update-complexity}
\end{equation}
For fixed \(K\) and \(d\), the leading-order complexity is \(\mathcal O(N_t^2r)\).

\subsubsection{WMMSE Complexity and Comparison}
The exact WMMSE/FP transmit update also determines the power multiplier by bisection. A direct implementation computes the inverse of a shifted \(N_t\times N_t\) curvature matrix at each bisection step and applies it to all beamformers, requiring \(\mathcal O(N_t^3+KN_t^2d)\) operations per step.

WMMSE-EVD avoids repeated matrix inversions by performing a full eigendecomposition of the curvature matrix before the bisection search. With the spectral energies precomputed, each bisection step evaluates a scalar power equation containing \(N_t\) terms. However, the full EVD still costs \(\mathcal O(N_t^3)\), and the beamformer projection and reconstruction cost \(\mathcal O(KN_t^2d)\). Thus, treating the number of bisection steps as fixed, both implementations have the leading-order complexity
\begin{equation}
C_{\mathrm W}
=\mathcal O\!\left(N_t^3+KN_t^2d\right)
\label{eq:wmmse-native-complexity}
\end{equation}
for fixed \(K\) and \(d\), although WMMSE-EVD reduces the computational cost by avoiding matrix inversion inside the bisection search.

With fixed user and stream counts, Spectral-NFP requires \(\mathcal O(N_t^2r)\) operations, compared with \(\mathcal O(N_t^3)\) for WMMSE. This reduces the leading-order cost when \(r\ll N_t\), while the WMMSE endpoint has the same cubic order.

\section{Theoretical Analysis}\label{theoretical-analysis}
\subsection{Certified Gain Relative to WMMSE}\label{certified-gain-relative-to-wmmse}

Proposition~\ref{prop:spectral-majorization} shows that increasing \(r\) monotonically tightens the spectral curvature upper bound. We next quantify how this tightening improves the one-step gain relative to the exact WMMSE transmit update.

Because \(\mathcal C\) and the fixed-auxiliary objectives separate across cells, we fix BS \(\ell\) and restrict \(\mathcal C\) to its cell-\(\ell\) factor throughout this subsection. We retain the cell index in all matrix quantities. For brevity, \(f_q(\underline{\bm V})\) and \(f_{s,r}(\underline{\bm V}\mid\underline{\bm V}^{(t)})\) denote the corresponding cell-wise transmit terms of the objectives defined in \eqref{eq:fp-quadratic-objective} and \eqref{eq:spectral-surrogate-objective}, with the auxiliary variables fixed. Let \(\underline{\bm V}^{(t)}=\{\bm V_{\ell k}^{(t)}\}_{k=1}^{K}\) be the current feasible beamformer block. Define the exact WMMSE transmit update as
\begin{equation}
\underline{\bm V}_W
=
\argmax_{\underline{\bm V}\in\mathcal C}
f_q(\underline{\bm V}).
\label{eq:wmmse-update-point}
\end{equation}
The resulting one-step gain is
\begin{equation}
\mathcal G_W
\triangleq
f_q(\underline{\bm V}_W)
-f_q(\underline{\bm V}^{(t)}).
\label{eq:wmmse-one-step-gain}
\end{equation}
Thus, \(\mathcal G_W\) measures the improvement in the fixed-auxiliary FP objective obtained by the exact WMMSE/FP transmit update from the current point.

Similarly, define the rank-\(r\) Spectral-NFP update at the same current point as
\begin{equation}
\underline{\bm V}_{S,r}
=
\argmax_{\underline{\bm V}\in\mathcal C}
f_{s,r}(\underline{\bm V}\mid\underline{\bm V}^{(t)}),
\label{eq:spectral-update-point}
\end{equation}
with the corresponding one-step surrogate gain
\begin{equation}
\mathcal G_{S,r}
\triangleq
f_{s,r}(\underline{\bm V}_{S,r}\mid\underline{\bm V}^{(t)})
-f_{s,r}(\underline{\bm V}^{(t)}\mid\underline{\bm V}^{(t)}).
\label{eq:spectral-one-step-gain}
\end{equation}
This quantity is the improvement certified by the rank-\(r\) Spectral-NFP lower-bound surrogate at the current point.
Since the surrogate lower-bounds \(f_q\) and is tight at
\(\underline{\bm V}^{(t)}\), it also certifies the actual fixed-auxiliary
improvement:
\begin{equation}
f_q(\underline{\bm V}_{S,r})
-f_q(\underline{\bm V}^{(t)})
\ge \mathcal G_{S,r}.
\label{eq:actual-gain-certificate}
\end{equation}
For \(\mathcal G_W>0\), \(\mathcal G_{S,r}/\mathcal G_W\) lower-bounds the fraction of the WMMSE transmit-objective improvement achieved by Spectral-NFP from the same current point.

\begin{proposition}[Rank-Monotone Certified Gain]\label{prop:rank-monotone-gain} For the same feasible current point, the same auxiliary variables, and exact rank-\(r\) spectral majorizers, define \(\mathcal G_N\triangleq\mathcal G_{S,0}\). Then
\begin{equation*}
0\le\mathcal G_N
=\mathcal G_{S,0}
\le\mathcal G_{S,1}
\le\cdots\le
\mathcal G_{S,N_t-1}
=\mathcal G_W.
\end{equation*}

\end{proposition}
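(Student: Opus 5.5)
Three ingredients give the whole chain: a uniform representation of the surrogate gain, the majorization ordering of Proposition~\ref{prop:spectral-majorization}, and the WMMSE endpoint.

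\textbf{Gain as a difference to $f_q$.} Write the cell-wise surrogate using the nonhomogeneous bound. With $\bm Z=\underline{\bm V}^{(t)}$, Lemma~\ref{lem:nonhomogeneous-bound} gives
\begin{equation*}
f_{s,r}(\underline{\bm V}\mid\underline{\bm V}^{(t)})
=f_q(\underline{\bm V})-\sum_{k=1}^{K}\operatorname{tr}\bigl((\bm V_{\ell k}-\bm V_{\ell k}^{(t)})^H(\bm K_{\ell,r}-\bm L_\ell)(\bm V_{\ell k}-\bm V_{\ell k}^{(t)})\bigr),
\end{equation*}
after fixing the additive constant so that $f_{s,r}(\underline{\bm V}^{(t)}\mid\underline{\bm V}^{(t)})=f_q(\underline{\bm V}^{(t)})$. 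This constant fixing is harmless because $\mathcal G_{S,r}$ is a difference. Expanding the trace term and comparing with \eqref{eq:spectral-surrogate-objective}--\eqref{eq:spectral-linear-term} confirms the identity. Hence
\begin{equation*}
\mathcal G_{S,r}=\max_{\underline{\bm V}\in\mathcal C}\Bigl[f_q(\underline{\bm V})-f_q(\underline{\bm V}^{(t)})-\textstyle\sum_k\|(\bm K_{\ell,r}-\bm L_\ell)^{1/2}(\bm V_{\ell k}-\bm V_{\ell k}^{(t)})\|_F^2\Bigr].
\end{equation*}

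\textbf{Monotonicity and nonnegativity.} By Proposition~\ref{prop:spectral-majorization}, $\bm K_{\ell,r+1}-\bm L_\ell\preceq\bm K_{\ell,r}-\bm L_\ell$. So for every fixed feasible $\underline{\bm V}$ the bracketed objective at rank $r+1$ is pointwise at least the one at rank $r$. Taking the maximum over the same set $\mathcal C$ yields $\mathcal G_{S,r}\le\mathcal G_{S,r+1}$. Nonnegativity $\mathcal G_N\ge0$ follows because $\underline{\bm V}^{(t)}\in\mathcal C$ is itself feasible and gives the value $0$.

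\textbf{Endpoint.} At $r=N_t-1$, Proposition~\ref{prop:spectral-majorization} gives $\bm K_{\ell,N_t-1}=\bm L_\ell$, so the penalty vanishes. The maximization then becomes exactly that defining $\underline{\bm V}_W$ in \eqref{eq:wmmse-update-point}, giving $\mathcal G_{S,N_t-1}=\mathcal G_W$.

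The only delicate point is the first step: verifying that the ``const'' in \eqref{eq:spectral-surrogate-objective} is indeed $\operatorname{tr}(\bm Z^H(\bm K-\bm L)\bm Z)$, so that the identity with $f_q$ holds exactly rather than up to a $\underline{\bm V}$-dependent term. I would check this by a direct expansion. Existence of the maxima is also needed; it holds because $\mathcal C$ is compact and the objectives are continuous. Ties in $\argmax$ do not affect the gains.
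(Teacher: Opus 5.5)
Your proposal is correct and follows the paper's proof essentially step for step: the same identity writing the surrogate gain as the $f_q$ gain minus a $(\bm K_{\ell,r}-\bm L_\ell)$-weighted penalty, a pointwise comparison via Proposition~\ref{prop:spectral-majorization} followed by maximization over $\mathcal C$, nonnegativity from the feasibility of $\underline{\bm V}^{(t)}$, and the endpoint $\bm K_{\ell,N_t-1}=\bm L_\ell$. The step you call delicate is not: the constant works out to $-\sum_k\operatorname{tr}\bigl((\bm V_{\ell k}^{(t)})^H(\bm K_{\ell,r}-\bm L_\ell)\bm V_{\ell k}^{(t)}\bigr)$ (note the minus sign), and, as you yourself observe, it cancels in $\mathcal G_{S,r}$ in any case.
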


\begin{IEEEproof}
For any feasible \(\underline{\bm V}\), substituting
\(\bm X=\bm V_{\ell k}\), \(\bm Z=\bm V_{\ell k}^{(t)}\),
\(\bm K=\bm K_{\ell,r}\), and \(\bm L=\bm L_\ell\) into
\eqref{eq:nonhomogeneous-bound}, and then using
\eqref{eq:spectral-linear-term}, gives
\begin{equation*}
\begin{aligned}
&f_{s,r}(\underline{\bm V}\mid\underline{\bm V}^{(t)})
-f_{s,r}(\underline{\bm V}^{(t)}\mid\underline{\bm V}^{(t)})\\
&=f_q(\underline{\bm V})-f_q(\underline{\bm V}^{(t)})
-\sum_k
\left\|\bm V_{\ell k}-\bm V_{\ell k}^{(t)}
\right\|_{\bm K_{\ell,r}-\bm L_\ell}^2.
\end{aligned}
\end{equation*}
Proposition~\ref{prop:spectral-majorization} implies \(\mathbf0\preceq\bm K_{\ell,r+1}-\bm L_\ell\preceq\bm K_{\ell,r}-\bm L_\ell\). Therefore, for every feasible \(\underline{\bm V}\),
\begin{multline}
f_{s,r+1}(\underline{\bm V}\mid\underline{\bm V}^{(t)})
-f_{s,r+1}(\underline{\bm V}^{(t)}\mid\underline{\bm V}^{(t)})\\
\ge f_{s,r}(\underline{\bm V}\mid\underline{\bm V}^{(t)})
-f_{s,r}(\underline{\bm V}^{(t)}\mid\underline{\bm V}^{(t)}).
\end{multline}

Maximizing both sides over \(\mathcal C\) proves the intermediate inequalities. The current point \(\underline{\bm V}^{(t)}\) is feasible and yields zero gain, proving nonnegativity. Finally, \(\bm K_{\ell,N_t-1}=\bm L_\ell\) makes \(f_{s,N_t-1}\) identical to \(f_q\) for the fixed auxiliary variables, so \(\mathcal G_{S,N_t-1}=\mathcal G_W\).
\end{IEEEproof}

\begin{theorem}[Relative Certified Gain]\label{thm:relative-certified-gain} Let \(\eta_{\ell,W}\ge0\) be the power-constraint multiplier associated with \(\underline{\bm V}_W\). Suppose that \(0\le r\le N_t-1\), \(\mathcal G_W>0\), and \(\lambda_{\ell,N_t}+\eta_{\ell,W}>0\). Then
\begin{equation}
\frac{\mathcal G_{S,r}}{\mathcal G_W}
\ge
\left[
1-
\frac{\lambda_{\ell,r+1}-\lambda_{\ell,N_t}}
{\lambda_{\ell,N_t}+\eta_{\ell,W}}
\right]_+.
\label{eq:relative-certified-gain}
\end{equation}
\end{theorem}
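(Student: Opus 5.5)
The plan is to compare the two one-step gains at one test point, namely the exact WMMSE update \(\underline{\bm V}_W\) itself. Since \(\underline{\bm V}_W\) is feasible, it gives \(\mathcal G_{S,r}\ge f_{s,r}(\underline{\bm V}_W\mid\underline{\bm V}^{(t)})-f_{s,r}(\underline{\bm V}^{(t)}\mid\underline{\bm V}^{(t)})\). The identity derived in the proof of Proposition~\ref{prop:rank-monotone-gain} (the nonhomogeneous bound combined with \eqref{eq:spectral-linear-term}) rewrites this right-hand side. With \(\bm\Delta_k\triangleq\bm V_{\ell k,W}-\bm V_{\ell k}^{(t)}\), it equals
\begin{equation*}
\mathcal G_W-\sum_k\|\bm\Delta_k\|_{\bm K_{\ell,r}-\bm L_\ell}^2 .
\end{equation*}
It therefore suffices to show that the penalty is at most \(c\,\mathcal G_W\), where \(c\triangleq(\lambda_{\ell,r+1}-\lambda_{\ell,N_t})/(\lambda_{\ell,N_t}+\eta_{\ell,W})\). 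The positive part \([\cdot]_+\) then follows from \(\mathcal G_{S,r}\ge 0\) in Proposition~\ref{prop:rank-monotone-gain}.

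First, I bound the penalty spectrally. In the eigenbasis \(\bm U_\ell\), the matrix \(\bm K_{\ell,r}-\bm L_\ell\) is diagonal with entries \(0\) on the first \(r\) coordinates and \(\lambda_{\ell,r+1}-\lambda_{\ell,a}\le\lambda_{\ell,r+1}-\lambda_{\ell,N_t}\) on the rest. Also \(\bm L_\ell+\eta_{\ell,W}\bm I\succeq(\lambda_{\ell,N_t}+\eta_{\ell,W})\bm I\succ\mathbf 0\). Together these give
\begin{equation*}
\bm K_{\ell,r}-\bm L_\ell\preceq c\,(\bm L_\ell+\eta_{\ell,W}\bm I_{N_t}),
\end{equation*}
and hence \(\sum_k\|\bm\Delta_k\|^2_{\bm K_{\ell,r}-\bm L_\ell}\le c\sum_k\|\bm\Delta_k\|^2_{\bm L_\ell+\eta_{\ell,W}\bm I}\).

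Second, I show that the WMMSE gain dominates the metric distance: \(\mathcal G_W\ge\sum_k\|\bm\Delta_k\|^2_{\bm L_\ell+\eta_{\ell,W}\bm I}\). For this I use the cell Lagrangian
\begin{equation*}
\mathcal L(\underline{\bm V})=f_q(\underline{\bm V})-\eta_{\ell,W}\Bigl(\sum_k\|\bm V_{\ell k}\|_F^2-P_\ell\Bigr).
\end{equation*}
It is a concave quadratic with Hessian form \(\bm L_\ell+\eta_{\ell,W}\bm I\), and \(\underline{\bm V}_W\) is its unconstrained stationary point by the KKT condition \((\bm L_\ell+\eta_{\ell,W}\bm I)\bm V_{\ell k,W}=\bm\Lambda_{\ell k}\). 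An exact second-order expansion around this maximizer then gives \(\mathcal L(\underline{\bm V}_W)-\mathcal L(\underline{\bm V}^{(t)})=\sum_k\|\bm\Delta_k\|^2_{\bm L_\ell+\eta_{\ell,W}\bm I}\). Complementary slackness gives \(\mathcal L(\underline{\bm V}_W)=f_q(\underline{\bm V}_W)\). Feasibility of \(\underline{\bm V}^{(t)}\) with \(\eta_{\ell,W}\ge0\) gives \(\mathcal L(\underline{\bm V}^{(t)})\ge f_q(\underline{\bm V}^{(t)})\). Combining these three facts yields the claim.

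Chaining the two steps gives \(\mathcal G_{S,r}\ge\mathcal G_W-c\,\mathcal G_W=(1-c)\,\mathcal G_W\). Dividing by \(\mathcal G_W>0\) and taking the positive part completes the argument.

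The main obstacle I expect is the second step, specifically the exact quadratic expansion of \(\mathcal L\) and the validity of the KKT characterization. The assumption \(\lambda_{\ell,N_t}+\eta_{\ell,W}>0\) makes \(\bm L_\ell+\eta_{\ell,W}\bm I\) invertible. This rules out the Moore--Penrose case, so \(\underline{\bm V}_W\) is the unique stationary point and satisfies the KKT system exactly, and it also keeps \(c\) finite. I would also check the sign conventions of the complex derivative, so that the expansion carries no factor of \(2\). Since \(f_q\) is written with \(2\operatorname{Re}\operatorname{tr}\) and \(-\operatorname{tr}(\bm V^H\bm L\bm V)\), the identity \(\mathcal L(\bm V_W)-\mathcal L(\bm V)=\|\bm V-\bm V_W\|^2_{\bm L+\eta\bm I}\) should hold with unit coefficient.
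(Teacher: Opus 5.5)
Your proposal is correct and follows the paper's proof: you test $\underline{\bm V}_W$ in the surrogate problem, use the identity from Proposition~\ref{prop:rank-monotone-gain}, lower-bound $\mathcal G_W$ through the KKT system and complementary slackness, and bound the penalty spectrally. Your Lagrangian expansion and the single Loewner inequality $\bm K_{\ell,r}-\bm L_\ell\preceq c(\bm L_\ell+\eta_{\ell,W}\bm I_{N_t})$ only repackage the paper's explicit gain expansion and its operator-norm/Frobenius-norm chain through $(\lambda_{\ell,N_t}+\eta_{\ell,W})\sum_k\|\bm\Delta_{W,\ell k}\|_F^2$, and they yield the same bound.
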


\begin{IEEEproof}
Define \(\bm\Delta_{W,\ell k}\triangleq\bm V_{W,\ell k}-\bm V_{\ell k}^{(t)}\). The KKT condition for \eqref{eq:wmmse-update-point} is
\begin{equation*}
(\bm L_\ell+\eta_{\ell,W}\bm I_{N_t})\bm V_{W,\ell k}
=\bm\Lambda_{\ell k}.
\end{equation*}
Using \(\bm V_{\ell k}^{(t)}=\bm V_{W,\ell k}-\bm\Delta_{W,\ell k}\) and
\begin{equation*}
2\operatorname{Re}\operatorname{tr}
(\bm\Delta_{W,\ell k}^H\bm V_{W,\ell k})
=\|\bm\Delta_{W,\ell k}\|_F^2
+\|\bm V_{W,\ell k}\|_F^2
-\|\bm V_{\ell k}^{(t)}\|_F^2,
\end{equation*}
the exact gain in \eqref{eq:wmmse-one-step-gain} becomes
\begin{equation*}
\begin{aligned}
\mathcal G_W
={}&\sum_k\|\bm\Delta_{W,\ell k}\|_{\bm L_\ell}^2
+\eta_{\ell,W}\sum_k\|\bm\Delta_{W,\ell k}\|_F^2\\
&+\eta_{\ell,W}\left(
\sum_k\|\bm V_{W,\ell k}\|_F^2
-\sum_k\|\bm V_{\ell k}^{(t)}\|_F^2
\right).
\end{aligned}
\end{equation*}
If \(\eta_{\ell,W}=0\), the multiplier terms vanish. If \(\eta_{\ell,W}>0\), complementary slackness and the feasibility of \(\underline{\bm V}^{(t)}\) give
\begin{equation*}
\sum_k\|\bm V_{W,\ell k}\|_F^2=P_\ell
\ge\sum_k\|\bm V_{\ell k}^{(t)}\|_F^2.
\end{equation*}
Since \(\bm L_\ell\succeq\lambda_{\ell,N_t}\bm I_{N_t}\), both cases imply
\begin{equation}
\mathcal G_W
\ge
(\lambda_{\ell,N_t}+\eta_{\ell,W})
\sum_k\|\bm\Delta_{W,\ell k}\|_F^2.
\label{eq:wmmse-gain-lower-bound}
\end{equation}

For any feasible \(\underline{\bm V}\), the same application of
\eqref{eq:nonhomogeneous-bound} gives
\begin{multline*}
f_{s,r}(\underline{\bm V}\mid\underline{\bm V}^{(t)})
-f_{s,r}(\underline{\bm V}^{(t)}\mid\underline{\bm V}^{(t)})\\
=f_q(\underline{\bm V})-f_q(\underline{\bm V}^{(t)})-\sum_k
\left\|\bm V_{\ell k}-\bm V_{\ell k}^{(t)}
\right\|_{\bm K_{\ell,r}-\bm L_\ell}^2.
\end{multline*}

The feasible point \(\underline{\bm V}_W\) is a valid candidate in \eqref{eq:spectral-update-point}, so
\begin{equation*}
\begin{aligned}
\mathcal G_{S,r}
&\ge
\mathcal G_W
-\sum_k\operatorname{tr}
[\bm\Delta_{W,\ell k}^H(\bm K_{\ell,r}-\bm L_\ell)\bm\Delta_{W,\ell k}]\\
&\ge
\mathcal G_W
-\|\bm K_{\ell,r}-\bm L_\ell\|_2
\sum_k\|\bm\Delta_{W,\ell k}\|_F^2.
\end{aligned}
\end{equation*}

The eigenvalues of \(\bm K_{\ell,r}-\bm L_\ell\) are zero in the retained subspace and \(\lambda_{\ell,r+1}-\lambda_{\ell,a}\) in the residual subspace. Therefore,
\begin{equation*}
\|\bm K_{\ell,r}-\bm L_\ell\|_2
=\lambda_{\ell,r+1}-\lambda_{\ell,N_t}.
\end{equation*}

Combining this identity with \eqref{eq:wmmse-gain-lower-bound} gives
\begin{equation*}
\mathcal G_{S,r}
\ge
\mathcal G_W
\left(
1-
\frac{\lambda_{\ell,r+1}-\lambda_{\ell,N_t}}
{\lambda_{\ell,N_t}+\eta_{\ell,W}}
\right).
\end{equation*}

Moreover, \(\mathcal G_{S,r}\ge0\) because \(\underline{\bm V}^{(t)}\) is feasible in \eqref{eq:spectral-update-point} and yields zero surrogate gain. Taking the larger of the two lower bounds proves \eqref{eq:relative-certified-gain}.
\end{IEEEproof}

Theorem~\ref{thm:relative-certified-gain} converts the relative
surrogate-certified one-step improvement of Spectral-NFP into a property of
the curvature spectrum. In particular, the relevant quantity is the
normalized residual spectral spread
\begin{equation}
\frac{\lambda_{\ell,r+1}-\lambda_{\ell,N_t}}
{\lambda_{\ell,N_t}+\eta_{\ell,W}}.
\end{equation}
As \(r\) increases, \(\lambda_{\ell,r+1}\) is nonincreasing, so this residual penalty decreases and the gain-ratio lower bound increases monotonically toward one. This agrees with Proposition~\ref{prop:rank-monotone-gain}, which orders the optimized surrogate gains directly. Thus, Spectral-NFP approaches the WMMSE transmit-update gain whenever the discarded curvature eigenvalues are sufficiently concentrated near the smallest eigenvalue. A small value of \(\lambda_{\ell,r+1}\) alone is not sufficient; the residual spread must be small relative to the minimum regularized WMMSE curvature.

Theorem~\ref{thm:relative-certified-gain} expresses the guaranteed gain ratio through the curvature spectrum and the WMMSE power multiplier. We next study its finite-dimensional probability bound under a random-curvature model and derive a rank-selection rule in the large-system limit.

\subsection{Finite-Dimensional Wishart Certificate}\label{finite-dimensional-wishart-certificate}

To analyze the spectrum of \(\bm L_\ell\), define each weighted effective-channel block and horizontally concatenate all \(LK\) blocks as
\begin{equation}
\begin{aligned}
\bm C_{ij,\ell}
&\triangleq
\sqrt{w_{ij}}
\bm H_{ij,\ell}^H
\bm Y_{ij}
(\bm I_d+\bm\Gamma_{ij})^{1/2}
\in\mathbb C^{N_t\times d},\\
\bm C_\ell
&\triangleq
\left[\bm C_{ij,\ell}\right]_{i,j}
\in\mathbb C^{N_t\times M},
\qquad M=LKd.
\end{aligned}
\label{eq:wishart-factor}
\end{equation}
Here, \((\bm I_d+\bm\Gamma_{ij})^{1/2}\) is the principal
Hermitian square root. The curvature matrix can then be written as the Gram
matrix
\begin{equation}
\bm L_\ell=\bm C_\ell\bm C_\ell^H.
\label{eq:wishart-factorization}
\end{equation}

For the following analysis, we model the entries of the effective-channel factor \(\bm C_\ell\) as independent \(\mathcal{CN}(0,1)\) variables. The resulting curvature matrix \(\bm L_\ell=\bm C_\ell\bm C_\ell^H\) is complex Wishart, with ordered-eigenvalue density given by the following lemma.

\begin{Lemma}[Ordered Complex-Wishart Eigenvalues~\cite{TulinoVerdu2004RMT}]
\label{lem:ordered-complex-wishart}
Let \(\bm C\in\mathbb C^{N_t\times M}\), with \(M\ge N_t\), have independent \(\mathcal{CN}(0,1)\) entries, and let
\begin{equation*}
\lambda_1\ge\lambda_2\ge\cdots\ge\lambda_{N_t}\ge0
\end{equation*}
be the ordered eigenvalues of \(\bm C\bm C^H\). Their joint probability density is
\begin{equation*}
p_{M,N_t}^{\downarrow}(\bm\lambda)
={}
K_{M,N_t}^{-1}
\prod_{a=1}^{N_t}e^{-\lambda_a}\lambda_a^{M-N_t}
\prod_{a<b}(\lambda_a-\lambda_b)^2
\end{equation*}
for \(\bm\lambda\in\mathcal D_{N_t}
\triangleq\{\bm\lambda:\lambda_1\ge\cdots\ge
\lambda_{N_t}\ge0\}\), and zero otherwise, where \(K_{M,N_t}\) is the corresponding normalizing constant.
\end{Lemma}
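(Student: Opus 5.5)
The plan is to derive the ordered density from the classical unordered joint eigenvalue density of a complex Wishart matrix, obtained through a change of variables. Write $\bm W=\bm C\bm C^H$. Because the entries of $\bm C$ are i.i.d.\ $\mathcal{CN}(0,1)$, the density of $\bm C$ is $\pi^{-N_tM}\exp(-\operatorname{tr}\bm C\bm C^H)$. This depends on $\bm C$ only through $\bm W$.

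\textbf{Step 1: density of $\bm W$.} I will integrate out the "angular" part of $\bm C$. Concretely, write $\bm C=\bm W^{1/2}\bm Q$, where $\bm Q$ has orthonormal rows in $\mathbb C^{N_t\times M}$; this is possible since $M\ge N_t$. The Jacobian of $\bm C\mapsto(\bm W,\bm Q)$ is proportional to $|\bm W|^{M-N_t}$. Integrating over the Stiefel manifold of $\bm Q$ then gives the complex Wishart density $\propto|\bm W|^{M-N_t}e^{-\operatorname{tr}\bm W}$ on positive definite Hermitian matrices.

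\textbf{Step 2: spectral change of variables.} Next I substitute $\bm W=\bm U\operatorname{diag}(\bm\lambda)\bm U^H$, with $\bm U$ unitary modulo the diagonal phase torus. For Hermitian matrices the Jacobian of this map is $\prod_{a<b}(\lambda_a-\lambda_b)^2$ times the Haar measure on $\bm U$. The exponent $2$ is the $\beta=2$ Vandermonde factor that appears in the complex case. The integrand depends only on $\bm\lambda$, through
\begin{equation*}
|\bm W|=\prod_a\lambda_a,\qquad \operatorname{tr}\bm W=\sum_a\lambda_a ,
\end{equation*}
so integrating out $\bm U$ contributes only a constant. This yields the unordered density $\propto\prod_a e^{-\lambda_a}\lambda_a^{M-N_t}\prod_{a<b}(\lambda_a-\lambda_b)^2$.

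\textbf{Step 3: ordering and normalization.} The unordered density is symmetric in $\bm\lambda$, and the eigenvalues are almost surely distinct. Restricting to the chamber $\mathcal D_{N_t}$ therefore multiplies the density by $N_t!$, and this factor is absorbed into $K_{M,N_t}$. The constant $K_{M,N_t}$ is then defined as the integral of the unnormalized kernel over $\mathcal D_{N_t}$. This integral is finite because the exponential factor dominates the polynomial factors. If an explicit value is wanted, I would evaluate it with the Selberg/Laguerre integral, $\prod_{a}\Gamma(M-a+1)\Gamma(N_t-a+1)$ up to ordering conventions, but the statement does not require it.

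\textbf{Main obstacle.} The hard step is Step 2: the Jacobian of the Hermitian eigendecomposition, together with the Step 1 Stiefel Jacobian. I would not rederive these. Instead I will cite them, as the lemma's source does, to Tulino--Verd\'u and Muirhead (the complex analogue of Muirhead's Theorem 3.2.17). If a self-contained check is wanted, the approach is to compute the differential $d\bm W=\bm U(d\bm\Lambda+[\bm U^Hd\bm U,\bm\Lambda])\bm U^H$. Each off-diagonal complex entry of this differential contributes the factor $(\lambda_a-\lambda_b)^2$, since a complex entry has two real coordinates, and this produces the Vandermonde-squared factor.
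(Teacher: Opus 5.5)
Your outline is correct and consistent with the paper, which does not derive this lemma but quotes it from Tulino--Verd\'u; your Steps 1--3 are the classical derivation behind that citation: the complex Wishart density via the polar/Stiefel Jacobian \(|\bm W|^{M-N_t}\), then the \(\beta=2\) Weyl Jacobian \(\prod_{a<b}(\lambda_a-\lambda_b)^2\), then restriction of the symmetric density to \(\mathcal D_{N_t}\). The constant you quote is exactly the ordered-chamber normalization, \(K_{M,N_t}=\prod_{a=1}^{N_t}(M-a)!\,(N_t-a)!\), the value given in the cited source, so you can drop the ``up to ordering conventions'' caveat.
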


For a prescribed relative loss \(\epsilon\in(0,1)\), we seek the probability that Spectral-NFP achieves at least a fraction \(1-\epsilon\) of the WMMSE transmit-update gain:
\begin{equation*}
\Pr\!\left(
\frac{\mathcal G_{S,r}}{\mathcal G_W}\ge1-\epsilon
\right).
\end{equation*}
By Theorem~\ref{thm:relative-certified-gain} and \(\eta_{\ell,W}\ge0\), we obtain
\begin{equation*}
\Pr\!\left(
\frac{\mathcal G_{S,r}}{\mathcal G_W}\ge1-\epsilon
\right)
\ge
\Pr\!\left(
\lambda_{r+1}\le(1+\epsilon)\lambda_{N_t}
\right).
\end{equation*}

Under the idealized Wishart model, Lemma~\ref{lem:ordered-complex-wishart} gives the probability on the right-hand side. Define its integration region as
\begin{equation*}
\mathcal A_{r,\epsilon}
=\left\{
\bm\lambda\in\mathcal D_{N_t}:
\lambda_{r+1}\le(1+\epsilon)\lambda_{N_t}
\right\}.
\end{equation*}
Integrating the ordered-eigenvalue density over this region gives
\begin{multline*}
\Pr(\mathcal A_{r,\epsilon})
=
K_{M,N_t}^{-1}
\int_{\mathcal A_{r,\epsilon}}
\prod_{a=1}^{N_t}e^{-\lambda_a}\lambda_a^{M-N_t}\\
\times
\prod_{a<b}(\lambda_a-\lambda_b)^2
\,d\lambda_1\cdots d\lambda_{N_t}.
\end{multline*}

Combining the deterministic gain bound with Lemma~\ref{lem:ordered-complex-wishart} yields the following result directly.

\begin{proposition}[Finite-Dimensional Wishart Certificate]\label{prop:finite-wishart-certificate} Under the i.i.d.-Gaussian factor model above with \(M\ge N_t\), let \(0\le r\le N_t-1\), \(\epsilon\in(0,1)\), and assume \(\mathcal G_W>0\) almost surely. Then
\begin{equation*}
\Pr\!\left(
\frac{\mathcal G_{S,r}}{\mathcal G_W}\ge1-\epsilon
\right)
\ge
\Pr(\mathcal A_{r,\epsilon}).
\end{equation*}
\end{proposition}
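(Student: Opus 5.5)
The plan is to show event inclusion: on a probability-one set, the event \(\mathcal A_{r,\epsilon}\) is contained in the event \(\{\mathcal G_{S,r}/\mathcal G_W\ge 1-\epsilon\}\). Monotonicity of probability then gives the claim. Lemma~\ref{lem:ordered-complex-wishart} is only needed afterwards, to write \(\Pr(\mathcal A_{r,\epsilon})\) as the displayed integral. The inequality itself rests on Theorem~\ref{thm:relative-certified-gain}.

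First, I would check that the hypotheses of Theorem~\ref{thm:relative-certified-gain} hold almost surely. By assumption, \(\mathcal G_W>0\) almost surely. For the second condition, the ordered density in Lemma~\ref{lem:ordered-complex-wishart} is absolutely continuous on \(\mathcal D_{N_t}\) when \(M\ge N_t\). Hence \(\lambda_{\ell,N_t}>0\) almost surely, since \(\bm C_\ell\) has full row rank with probability one. Together with \(\eta_{\ell,W}\ge0\), this gives \(\lambda_{\ell,N_t}+\eta_{\ell,W}>0\) almost surely. Let \(\Omega_0\) be the probability-one event on which both conditions hold.

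Next, fix an outcome in \(\Omega_0\cap\mathcal A_{r,\epsilon}\). Then \(\lambda_{\ell,r+1}-\lambda_{\ell,N_t}\le\epsilon\lambda_{\ell,N_t}\le\epsilon(\lambda_{\ell,N_t}+\eta_{\ell,W})\), where the second step uses \(\eta_{\ell,W}\ge0\) and \(\epsilon>0\). Therefore the normalized residual spread in \eqref{eq:relative-certified-gain} is at most \(\epsilon<1\). The positive part in the theorem is then inactive, and the theorem yields \(\mathcal G_{S,r}/\mathcal G_W\ge1-\epsilon\).

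Consequently \(\Pr(\mathcal G_{S,r}/\mathcal G_W\ge1-\epsilon)\ge\Pr(\Omega_0\cap\mathcal A_{r,\epsilon})=\Pr(\mathcal A_{r,\epsilon})\).

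The only delicate point is measurability and the almost-sure qualifications. The ratio \(\mathcal G_{S,r}/\mathcal G_W\) and the multiplier \(\eta_{\ell,W}\) are functions of the random curvature and the fixed linear terms. I expect to dispose of this by noting that \(\mathcal G_{S,r}\) and \(\mathcal G_W\) are optimal values of continuously parameterized convex programs, so they are measurable. The eigenvalue events are measurable because ordered eigenvalues are continuous functions of \(\bm L_\ell\). Beyond this, the argument is a direct corollary, and I anticipate no genuine obstacle.
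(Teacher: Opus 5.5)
Your proposal is correct and follows the paper's own argument. The paper likewise obtains the inclusion \(\mathcal A_{r,\epsilon}\subseteq\{\mathcal G_{S,r}/\mathcal G_W\ge1-\epsilon\}\) directly from Theorem~\ref{thm:relative-certified-gain} together with \(\eta_{\ell,W}\ge0\), then applies monotonicity of probability, and uses Lemma~\ref{lem:ordered-complex-wishart} only to write \(\Pr(\mathcal A_{r,\epsilon})\) as an integral. Your checks that \(\lambda_{\ell,N_t}>0\) almost surely when \(M\ge N_t\), so that the hypothesis \(\lambda_{\ell,N_t}+\eta_{\ell,W}>0\) holds, and your measurability remark are sound details the paper leaves implicit.
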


Because the sufficient spectral condition removes the nonnegative, sample-dependent multiplier \(\eta_{\ell,W}\), \(\Pr(\mathcal A_{r,\epsilon})\) is a lower bound on the probability of attaining the target gain ratio. Since these events are nested in \(r\), for a tolerated failure probability \(\delta\in(0,1)\), one may choose the smallest \(r\) satisfying \(\Pr(\mathcal A_{r,\epsilon})\ge 1-\delta\).
However, the ordered \(N_t\)-dimensional integral generally does not admit a simple closed-form expression and becomes difficult to evaluate as \(N_t\) grows. We therefore turn to the large-system limit to obtain a tractable rank-selection rule.

\subsection{Asymptotic Rank-Selection Rule}\label{asymptotic-rank-selection-rule}

To obtain an explicit rank-selection rule, we consider the large-system limit in which both \(N_t\) and the factor count \(M=LKd\) grow with a fixed ratio.

To simplify notation in this subsection, we suppress the cell index and write
\(\bm C=\bm C_\ell\), \(\bm L=\bm L_\ell\),
\(\lambda_a=\lambda_{\ell,a}\), and
\(\eta_W=\eta_{\ell,W}\).

Let
\begin{equation}
N_t,M\rightarrow\infty,
\qquad
\frac{N_t}{M}\rightarrow c\in(0,1),
\label{eq:large-system-scaling}
\end{equation}
and define the empirical distribution of the normalized eigenvalues:
\begin{equation}
F_{N_t}(x)
=\frac{1}{N_t}
\sum_{a=1}^{N_t}
\bm 1\left\{
\frac{\lambda_a(\bm L)}{M}\le x
\right\}.
\label{eq:empirical-spectrum}
\end{equation}
Thus, \(F_{N_t}(x)\) is the fraction of transmit-side curvature directions whose normalized curvature does not exceed \(x\).

\begin{Lemma}[Marchenko--Pastur Law~\cite{MarchenkoPastur1967,TulinoVerdu2004RMT}]
\label{lem:mp-law}
Under the i.i.d.-Gaussian factor model and the scaling in \eqref{eq:large-system-scaling}, the empirical distribution \(F_{N_t}\) in \eqref{eq:empirical-spectrum} converges almost surely to a deterministic distribution \(F_c\) with density
\begin{equation}
f_c(x)
=
\begin{cases}
\dfrac{\sqrt{(b_c-x)(x-a_c)}}{2\pi cx},
& a_c\le x\le b_c,\\[4pt]
0, & \text{otherwise},
\end{cases}
\label{eq:mp-density}
\end{equation}
where
\begin{equation}
a_c=(1-\sqrt c)^2,
\qquad
b_c=(1+\sqrt c)^2.
\label{eq:mp-edges}
\end{equation}
\end{Lemma}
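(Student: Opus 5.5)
This lemma is the classical Marchenko--Pastur theorem, which the paper cites. I would prove it by the Stieltjes-transform (resolvent) method applied to the normalized matrix \(\bm W\triangleq\bm L/M=\frac1M\sum_{m=1}^{M}\bm c_m\bm c_m^H\), where \(\bm c_m\in\mathbb C^{N_t}\) are the i.i.d.\ \(\mathcal{CN}(\mathbf 0,\bm I_{N_t})\) columns of \(\bm C\). For \(z\in\mathbb C^+\), define \(s_{N_t}(z)=\frac1{N_t}\operatorname{tr}(\bm W-z\bm I_{N_t})^{-1}\), which is the Stieltjes transform of \(F_{N_t}\) in \eqref{eq:empirical-spectrum}. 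By the Stieltjes continuity theorem, almost-sure weak convergence \(F_{N_t}\to F_c\) follows from two facts: \(s_{N_t}(z)\to s_c(z)\) almost surely for every \(z\) in a countable set with an accumulation point in \(\mathbb C^+\), and \(s_c\) is the Stieltjes transform of the density \(f_c\) in \eqref{eq:mp-density}. Since all the functions involved are analytic and uniformly bounded by \(1/\Im z\), a Vitali-type argument extends the convergence from the countable set to all of \(\mathbb C^+\).

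The first step is concentration. Viewing \(s_{N_t}(z)\) as a function of the independent columns \(\bm c_1,\ldots,\bm c_M\), a rank-one perturbation changes \(\operatorname{tr}(\bm W-z\bm I_{N_t})^{-1}\) by at most \(1/\Im z\). Hence the martingale-difference (Burkholder) inequality gives \(\mathbb E|s_{N_t}-\mathbb E s_{N_t}|^4=\mathcal O(N_t^{-2})\). Borel--Cantelli then yields \(s_{N_t}-\mathbb E s_{N_t}\to0\) almost surely, so it suffices to identify the limit of \(\mathbb E s_{N_t}\).

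The second step derives a self-consistent equation, and I would carry it out as follows. Write \(\bm R=(\bm W-z\bm I_{N_t})^{-1}\) and let \(\bm R_m\) be the resolvent with column \(m\) removed. Taking the trace of the identity \(\bm R(\bm W-z\bm I_{N_t})=\bm I_{N_t}\) gives
\[
N_t+zN_t s_{N_t}=\sum_{m=1}^{M}\frac{1}{M}\bm c_m^H\bm R\bm c_m .
\]
By Sherman--Morrison, \(\frac1M\bm c_m^H\bm R\bm c_m=1-\bigl(1+\tfrac1M\bm c_m^H\bm R_m\bm c_m\bigr)^{-1}\). The quadratic-form lemma \(\mathbb E|\bm c^H\bm A\bm c-\operatorname{tr}\bm A|^2\le C\operatorname{tr}(\bm A\bm A^H)\), together with the rank-one bound \(|\operatorname{tr}\bm R_m-\operatorname{tr}\bm R|\le1/\Im z\), lets me replace \(\frac1M\bm c_m^H\bm R_m\bm c_m\) by \(c\,s_{N_t}(z)\) up to an \(o(1)\) error. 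This leads to
\[
1+z s=\frac1c\Bigl(1-\frac{1}{1+cs}\Bigr)+o(1),
\qquad\text{equivalently}\qquad
cz s^2+(z+c-1)s+1=o(1).
\]
I expect this step to be the main obstacle. The error terms must be bounded uniformly in \(m\) and in expectation, which requires keeping the denominators \(1+\frac1M\bm c_m^H\bm R_m\bm c_m\) away from zero. For this I would use the standard bound \(|1+\frac1M\bm c^H\bm R_m\bm c|^{-1}\le|z|/\Im z\), obtained from the imaginary parts.

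The final step identifies the limit. Any subsequential limit \(s\) of the bounded analytic functions \(\mathbb E s_{N_t}\) solves \(czs^2+(z+c-1)s+1=0\). I would select the root with \(\Im s>0\) on \(\mathbb C^+\), which gives uniqueness. The Stieltjes inversion formula \(f_c(x)=\pi^{-1}\lim_{y\downarrow0}\Im s(x+\mathrm iy)\) then recovers the density in \eqref{eq:mp-density}, since the discriminant \((z+c-1)^2-4cz=(z-a_c)(z-b_c)\) vanishes exactly at the edges \eqref{eq:mp-edges}. Because \(c<1\), there is no atom at zero. As a cross-check, one could instead use the moment method: \(\mathbb E\frac1{N_t}\operatorname{tr}\bm W^k\) converges to \(\sum_{j}\frac1{k}\binom{k}{j}\binom{k}{j-1}c^{j-1}\), the Narayana polynomial, which matches the moments of \(f_c\). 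Variance bounds summable in \(N_t\), combined with Borel--Cantelli, would again give almost-sure convergence.
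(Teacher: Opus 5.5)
Your proposal is correct, but it takes a different route from the paper only in the trivial sense that the paper gives no proof of this lemma. The paper imports the result from Marchenko--Pastur and Tulino--Verd\'u as a black box.

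You instead reconstruct the standard Bai--Silverstein resolvent argument:
\begin{itemize}
\item martingale concentration of $s_{N_t}$;
\item Sherman--Morrison plus the quadratic-form lemma to reach $czs^2+(z+c-1)s+1=0$;
\item root selection;
\item Stieltjes inversion.
\end{itemize}

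Your computations check out. This covers the trace identity, the leave-one-out formula and the denominator bound $|z|/\Im z$ obtained from $\Im(z\,\bm c^H\bm R_m\bm c)\ge 0$. It also covers the discriminant $(z-a_c)(z-b_c)$ and the inversion, which yields exactly $f_c$ with the $1/(2\pi c x)$ normalization appropriate for $\bm L/M$ with $N_t/M\to c$. The Narayana moments in your cross-check are right as well.

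The one step you state too tersely is root selection. You can make it airtight as follows. For $z\in\mathbb C^+$ neither root can be real: a real $s$ would force $z=-((c-1)s+1)/(cs^2+s)$ to be real, and $cs^2+s=0$ is incompatible with the equation. Hence the number of roots in the upper half-plane is locally constant on the connected set $\mathbb C^+$. The large-$|z|$ asymptotics $s\approx -1/z$ and $s\approx -1/c+1/(cz)$ then show that exactly one root lies in $\mathbb C^+$.

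What your route buys is self-containedness and robustness. The same argument works for non-Gaussian entries with finite fourth moments. It also delivers exactly the almost-sure weak convergence that the later quantile limit \eqref{eq:discarded-eigenvalue-limit} requires. The citation buys brevity, which is reasonable for a classical result the paper uses only as a black box.

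As in the paper, your argument controls only the bulk spectrum. The lower-edge limit $\lambda_{N_t}/M\to a_c$ still needs the separate Bai--Yin lemma.
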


Let the retained rank \(r\) increase with \(N_t\) such that
\begin{equation}
\frac{r}{N_t}\longrightarrow\rho,
\qquad \rho\in(0,1).
\label{eq:rank-scaling}
\end{equation}

Thus, \(r\) varies with the antenna dimension rather than remaining fixed.
Under the descending ordering \(\lambda_1\ge\cdots\ge\lambda_{N_t}\), the
eigenvalue \(\lambda_{r+1}\) has a fraction \((N_t-r)/N_t\) of the
eigenvalues at or below it. Since \(F_c\) is continuous and strictly increasing
on \((a_c,b_c)\), the quantile consequence of Lemma~\ref{lem:mp-law} gives
\begin{equation}
\frac{\lambda_{r+1}}{M}
\xrightarrow{\mathrm{a.s.}}
F_c^{-1}(1-\rho),
\label{eq:discarded-eigenvalue-limit}
\end{equation}
This quantile argument applies only when \(r/N_t\) remains away from the two spectral endpoints and therefore does not determine the lower endpoint \(\lambda_{N_t}\). The following lower-edge result supplies this missing limit.

\begin{Lemma}[Bai--Yin Lower-Edge Limit~\cite{BaiYin1993SmallestEigenvalue}]
\label{lem:bai-yin-lower-edge}
Under the i.i.d.-Gaussian factor model and the scaling in \eqref{eq:large-system-scaling}, the smallest eigenvalue satisfies
\begin{equation}
\frac{\lambda_{N_t}}{M}
\xrightarrow{\mathrm{a.s.}}
a_c.
\label{eq:smallest-eigenvalue-limit}
\end{equation}
\end{Lemma}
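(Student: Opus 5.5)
The upper bound will come from the Marchenko--Pastur law, and the lower bound from a Gaussian concentration argument. Write $s_{\min}(\bm C)=\sqrt{\lambda_{N_t}}$ for the smallest singular value of $\bm C\in\mathbb C^{N_t\times M}$. Since $M\ge N_t$, $\lambda_{N_t}$ is also the smallest eigenvalue of $\bm C\bm C^H$. It therefore suffices to prove
\begin{equation*}
\limsup \frac{\lambda_{N_t}}{M}\le a_c
\quad\text{and}\quad
\liminf \frac{\lambda_{N_t}}{M}\ge a_c
\end{equation*}
almost surely.

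\emph{Upper bound.} I will apply Lemma~\ref{lem:mp-law}. Fix $\varepsilon>0$. Because $c\in(0,1)$, the limit $F_c$ has no atom at zero, and its density $f_c$ in \eqref{eq:mp-density} is strictly positive on $(a_c,b_c)$. Hence $F_c(a_c+\varepsilon)>0$. Almost sure weak convergence of $F_{N_t}$, together with continuity of $F_c$ at $a_c+\varepsilon$, gives $F_{N_t}(a_c+\varepsilon)\to F_c(a_c+\varepsilon)>0$ almost surely. So for all large $N_t$, at least one normalized eigenvalue lies at or below $a_c+\varepsilon$, which means $\lambda_{N_t}/M\le a_c+\varepsilon$. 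Letting $\varepsilon\downarrow0$ along a countable sequence gives $\limsup \lambda_{N_t}/M\le a_c$ almost surely.

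\emph{Lower bound.} This is the main obstacle, because weak convergence places no constraint on a single outlying small eigenvalue. I will use the Gaussian structure through two ingredients.

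First, a Gordon-type comparison (Slepian--Gordon min-max inequality) gives $\mathbb E\, s_{\min}(\bm C)\ge \sqrt{M}-\sqrt{N_t}$ up to an $O(1)$ correction. In the complex case I will identify $\mathbb C^{N_t\times M}$ with a real Gaussian model, either via the real representation with variance $1/2$ per real entry or by treating real and imaginary parts separately. This only changes constants that vanish after division by $\sqrt M$.

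Second, $s_{\min}$ is $1$-Lipschitz in the Frobenius norm, so Gaussian concentration yields
\begin{equation*}
\Pr\!\left(s_{\min}(\bm C)\le \sqrt M-\sqrt{N_t}-O(1)-t\right)\le e^{-t^2}
\end{equation*}
for the appropriately scaled variance. I will take $t=\varepsilon\sqrt M$, so the bound becomes $e^{-\varepsilon^2 M}$. This is summable along the sequence of dimensions, since $M\to\infty$ and each dimension pair occurs once. Borel--Cantelli then gives, almost surely,
\begin{equation*}
\liminf \frac{s_{\min}(\bm C)}{\sqrt M}\ge 1-\sqrt c-\varepsilon,
\end{equation*}
where I use $\sqrt{N_t/M}\to\sqrt c$ from \eqref{eq:large-system-scaling}.

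Since $1-\sqrt c>0$, I can choose $\varepsilon<1-\sqrt c$ so that the right-hand side is positive. Squaring and then letting $\varepsilon\downarrow0$ gives $\liminf \lambda_{N_t}/M\ge(1-\sqrt c)^2=a_c$. Combined with the upper bound, this yields \eqref{eq:smallest-eigenvalue-limit}.

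If the factor entries were only i.i.d.\ with finite fourth moment rather than Gaussian, the concentration step would not be available. In that case I would instead follow Bai and Yin's truncation, centralization and high-moment trace method applied to $\bm C\bm C^H/M-(1+c)\bm I$. Under the Gaussian model adopted here, however, the comparison-plus-concentration route above suffices.
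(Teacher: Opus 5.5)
\textbf{Gap: the complex case is not justified.} Your upper bound via Lemma~\ref{lem:mp-law} is correct. For a \emph{real} i.i.d.\ Gaussian factor, your lower bound (Gordon's comparison for $\mathbb E\,s_{\min}$, then Gaussian concentration with $t=\varepsilon\sqrt M$, then Borel--Cantelli) is also valid. It is a more elementary alternative to the paper's treatment, which simply cites Bai and Yin's truncation and trace-moment theorem. The problem is the sentence dismissing the complex case. The paper's factor has $\mathcal{CN}(0,1)$ entries, and neither identification you propose yields $\mathbb E\,s_{\min}(\bm C)\ge\sqrt M-\sqrt{N_t}-O(1)$.

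\emph{Real representation.} The matrix $\bigl[\begin{smallmatrix}\bm A&-\bm B\\ \bm B&\bm A\end{smallmatrix}\bigr]$ built from $\bm C=\bm A+i\bm B$ has the right singular values. However, its entries are not independent (each Gaussian appears twice), so the i.i.d.\ Gordon/Davidson--Szarek bound does not apply to it.

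\emph{Direct complex Gordon.} Suppose you compare $Y_{\bm u,\bm t}=\Re(\bm t^H\bm C^H\bm u)$ over complex unit vectors against $X_{\bm u,\bm t}=\Re(\bm g^H\bm u)+\Re(\bm h^H\bm t)$. The required hypothesis $\mathbb E(X_{\bm u,\bm t}-X_{\bm v,\bm s})^2\ge\mathbb E(Y_{\bm u,\bm t}-Y_{\bm v,\bm s})^2$ for $\bm u\ne\bm v$ reduces to $\Re[(1-\bm u^H\bm v)(1-\bm s^H\bm t)]\ge0$. This is false: for $\bm v=e^{i\theta}\bm u$ and $\bm t=e^{i\theta}\bm s$ it equals $-4\sin^2(\theta/2)\cos\theta<0$ whenever $0<\theta<\pi/2$.

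\emph{Splitting real and imaginary parts.} With $\bm u=\bm x+i\bm y$ one has $\|\bm C^H\bm u\|^2=\|\bm G^T\bm w_1\|^2+\|\bm G^T\bm w_2\|^2$, where $\bm G=\bigl[\begin{smallmatrix}\bm A\\ \bm B\end{smallmatrix}\bigr]\in\mathbb R^{2N_t\times M}$ and $\bm w_1=\bigl[\begin{smallmatrix}\bm x\\ \bm y\end{smallmatrix}\bigr]$, $\bm w_2=\bigl[\begin{smallmatrix}\bm y\\ -\bm x\end{smallmatrix}\bigr]$ are orthonormal. This only gives the constant $(1-\sqrt{2c})^2<a_c$ when $c<1/2$, and nothing when $c\ge1/2$. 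That is a loss in the limiting ratio, not an additive constant that vanishes after dividing by $\sqrt M$. Using $[\bm A,\bm B]$ instead gives $\lambda_{N_t}\le\lambda_{\min}(\bm A\bm A^T+\bm B\bm B^T)$, which points the wrong way.

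\textbf{What is missing and how to supply it.} The hard half of the lemma, the lower edge, is therefore not established for the model the paper actually uses. Your concentration and Borel--Cantelli steps carry over unchanged, since $\bm C\mapsto s_{\min}(\bm C)$ is $1$-Lipschitz in the $2N_tM$ real coordinates. What is missing is a genuinely complex ($\beta=2$) input for the mean or the lower tail. Possible sources:
\begin{itemize}
\item the Bai--Yin/Bai--Silverstein trace method applied to $\bm C\bm C^H/M-(1+c)\bm I$, which extends to complex entries (so your ``fallback'' is needed here, not only for non-Gaussian entries);
\item a large-deviation bound on $\Pr\bigl(\lambda_{N_t}\le(a_c-\varepsilon)M\bigr)$ extracted from the Laguerre density in Lemma~\ref{lem:ordered-complex-wishart};
\item the exact complex-Wishart moment-generating-function analysis of Haagerup and Thorbj{\o}rnsen, which treats the lower edge directly.
\end{itemize}
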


Combining \eqref{eq:discarded-eigenvalue-limit} and
\eqref{eq:smallest-eigenvalue-limit} with
Theorem~\ref{thm:relative-certified-gain} yields the following asymptotic
result.

\begin{proposition}[Wishart Rank-Fraction Certificate]
\label{prop:wishart-rank-fraction}
Under the i.i.d. Gaussian factor model above, the large-system scaling in
\eqref{eq:large-system-scaling}, the rank scaling in \eqref{eq:rank-scaling},
and the condition \(\mathcal G_W>0\) almost surely along the sequence,
\begin{equation}
\liminf_{N_t\rightarrow\infty}
\frac{\mathcal G_{S,r}}{\mathcal G_W}
\ge
\left[
1-
\frac{F_c^{-1}(1-\rho)-a_c}{a_c}
\right]_+
\quad\mathrm{a.s.}
\label{eq:asymptotic-gain-bound}
\end{equation}
\end{proposition}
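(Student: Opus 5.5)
The plan is to apply Theorem~\ref{thm:relative-certified-gain} realization by realization along the sequence, weaken it to a bound that depends only on $\lambda_{r+1}$ and $\lambda_{N_t}$, and then pass to the limit using \eqref{eq:discarded-eigenvalue-limit} and Lemma~\ref{lem:bai-yin-lower-edge}.

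\textbf{Step 1: a deterministic bound for each $N_t$.} Fix a realization in the almost-sure event $\Omega_0$ on which three things hold simultaneously: \eqref{eq:discarded-eigenvalue-limit}, \eqref{eq:smallest-eigenvalue-limit}, and $\mathcal G_W>0$ for every $N_t$. This is an intersection of countably many probability-one events, so it has probability one.

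Since $c<1$, we have $a_c=(1-\sqrt c)^2>0$. Hence $\lambda_{N_t}/M\to a_c>0$, so $\lambda_{N_t}>0$ for all large $N_t$. For those $N_t$ the hypothesis $\lambda_{N_t}+\eta_W>0$ of Theorem~\ref{thm:relative-certified-gain} holds automatically, because $\eta_W\ge0$. The theorem then gives
\begin{equation*}
\frac{\mathcal G_{S,r}}{\mathcal G_W}\ge\left[1-\frac{\lambda_{r+1}-\lambda_{N_t}}{\lambda_{N_t}+\eta_W}\right]_+\ge\left[1-\frac{\lambda_{r+1}/M-\lambda_{N_t}/M}{\lambda_{N_t}/M}\right]_+ .
\end{equation*}
The second inequality uses two facts. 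First, the numerator is nonnegative, since $\lambda_{r+1}\ge\lambda_{N_t}$. Second, removing $\eta_W\ge0$ shrinks the denominator, which enlarges the subtracted term. The map $t\mapsto[1-t]_+$ is nonincreasing, so the bound only weakens. Dividing numerator and denominator by $M$ leaves the ratio unchanged and puts it in normalized form.

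\textbf{Step 2: pass to the limit.} On $\Omega_0$, $\lambda_{r+1}/M\to F_c^{-1}(1-\rho)$ and $\lambda_{N_t}/M\to a_c>0$. The function
\begin{equation*}
(x,y)\mapsto[1-(x-y)/y]_+
\end{equation*}
is continuous on $\{y>0\}$. So the right-hand side of Step~1 converges to $[1-(F_c^{-1}(1-\rho)-a_c)/a_c]_+$. Taking $\liminf$ of the left-hand side then yields \eqref{eq:asymptotic-gain-bound} almost surely.

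\textbf{Main obstacle.} The delicate step is justifying \eqref{eq:discarded-eigenvalue-limit}, namely that the order statistic $\lambda_{r+1}/M$ converges to the quantile $F_c^{-1}(1-\rho)$ when $r/N_t\to\rho\in(0,1)$. I would take it as given from the text preceding the statement, but would note why it holds. Almost-sure weak convergence $F_{N_t}\to F_c$ together with continuity of $F_c$ gives uniform convergence of the distribution functions, by Pólya's theorem. Since $F_c$ is strictly increasing near the interior point $F_c^{-1}(1-\rho)$, the empirical $(1-r/N_t)$-quantile converges to that point.

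A second subtlety is the role of $\eta_W$. It is sample-dependent with no controlled limit, but Step~1 discards it before any limit is taken, so no information about it is needed.
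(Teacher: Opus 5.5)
Your proposal is correct and follows essentially the same route as the paper's proof. Both drop the nonnegative multiplier $\eta_W$ from Theorem~\ref{thm:relative-certified-gain} to obtain an $\eta$-free bound in $\lambda_{r+1}/M$ and $\lambda_{N_t}/M$, and then pass to the limit on the probability-one event using \eqref{eq:discarded-eigenvalue-limit}, Lemma~\ref{lem:bai-yin-lower-edge}, and continuity. You also verify two points the paper leaves implicit: that $a_c>0$ forces $\lambda_{N_t}>0$ eventually, so the theorem's hypothesis $\lambda_{N_t}+\eta_W>0$ holds, and that the quantile limit follows from uniform convergence of the empirical distribution.
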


\begin{IEEEproof}
Since \(\eta_W\ge0\), \eqref{eq:relative-certified-gain} gives
\begin{equation}
\frac{\mathcal G_{S,r}}{\mathcal G_W}
\ge
\left[
1-
\frac{\lambda_{r+1}-\lambda_{N_t}}
{\lambda_{N_t}}
\right]_+.
\label{eq:eta-free-gain-bound}
\end{equation}
On the probability-one event where \eqref{eq:discarded-eigenvalue-limit} and \eqref{eq:smallest-eigenvalue-limit} hold,
\begin{equation*}
\begin{aligned}
\frac{\lambda_{r+1}-\lambda_{N_t}}{\lambda_{N_t}}
&=
\frac{\lambda_{r+1}/M-\lambda_{N_t}/M}
{\lambda_{N_t}/M}\\
&\xrightarrow{\mathrm{a.s.}}
\frac{F_c^{-1}(1-\rho)-a_c}{a_c}.
\end{aligned}
\end{equation*}
Since \([1-x]_+\) is continuous, applying this limit to \eqref{eq:eta-free-gain-bound} gives \eqref{eq:asymptotic-gain-bound}. At the support boundaries, \(F_c^{-1}\) is understood as the generalized inverse.
\end{IEEEproof}

Proposition~\ref{prop:wishart-rank-fraction} relates the asymptotic gain ratio to the retained rank fraction. It therefore yields the following rank-selection criterion for achieving a prescribed gain ratio.

\begin{corollary}[Asymptotic Rank-Selection Rule]
\label{cor:asymptotic-rank-selection}
Under the conditions of Proposition~\ref{prop:wishart-rank-fraction}, for a target loss \(\epsilon\in(0,1)\), the rank condition
\begin{equation}
\rho
\ge
1-F_c\!\left((1+\epsilon)a_c\right)
\label{eq:mp-rank-fraction-rule}
\end{equation}
implies
\begin{equation}
\liminf_{N_t\rightarrow\infty}
\frac{\mathcal G_{S,r}}{\mathcal G_W}
\ge1-\epsilon
\quad\mathrm{a.s.}
\end{equation}
\end{corollary}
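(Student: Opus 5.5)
The corollary follows from Proposition~\ref{prop:wishart-rank-fraction}. The only work is to convert the condition on $\rho$ into a bound on the quantile $F_c^{-1}(1-\rho)$. The plan is to show that the rank condition \eqref{eq:mp-rank-fraction-rule} implies
\begin{equation*}
F_c^{-1}(1-\rho)\le(1+\epsilon)a_c .
\end{equation*}
Substituting this into \eqref{eq:asymptotic-gain-bound} then gives
\begin{equation*}
\frac{F_c^{-1}(1-\rho)-a_c}{a_c}\le\epsilon ,
\end{equation*}
so the bracket in \eqref{eq:asymptotic-gain-bound} is at least $1-\epsilon>0$. The positive part is therefore inactive and the $\liminf$ bound follows almost surely.

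\emph{Key step: quantile inversion.} The condition \eqref{eq:mp-rank-fraction-rule} is equivalent to
\begin{equation*}
F_c\bigl((1+\epsilon)a_c\bigr)\ge 1-\rho .
\end{equation*}
Use the generalized inverse $F_c^{-1}(u)=\inf\{x:F_c(x)\ge u\}$, as in the proof of Proposition~\ref{prop:wishart-rank-fraction}. Since $F_c$ is nondecreasing, the defining property $F_c(x)\ge u\Rightarrow F_c^{-1}(u)\le x$ applies with $x=(1+\epsilon)a_c$ and $u=1-\rho$. This gives exactly the desired quantile bound.

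Two points need checking. First, $a_c>0$ because $c\in(0,1)$, so the division by $a_c$ in \eqref{eq:asymptotic-gain-bound} is legitimate. Second, the argument must stay valid at the support boundaries. If $(1+\epsilon)a_c\ge b_c$, then $F_c((1+\epsilon)a_c)=1$; the rule then asks for $\rho\ge0$, which always holds, and the quantile bound still holds because $F_c^{-1}(1-\rho)\le b_c\le(1+\epsilon)a_c$. Since $\rho\in(0,1)$, the level $1-\rho$ lies strictly inside $(0,1)$, and continuity and strict monotonicity of $F_c$ on $(a_c,b_c)$ make the inverse well defined there.

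The main obstacle is ensuring the limiting bound of Proposition~\ref{prop:wishart-rank-fraction} is used in the right direction. We need
\begin{equation*}
\liminf\frac{\mathcal G_{S,r}}{\mathcal G_W}\ \ge\ g(\rho)\ \ge\ 1-\epsilon ,
\end{equation*}
where $g(\rho)=\bigl[1-(F_c^{-1}(1-\rho)-a_c)/a_c\bigr]_+$. This requires $g$ to be evaluated at the actual limiting rank fraction $\rho$ appearing in \eqref{eq:rank-scaling}, not at some perturbed value. Because the proposition is stated with $\rho$ equal to that exact limit, no further continuity argument is needed: the inequality $g(\rho)\ge1-\epsilon$ is deterministic and can be applied on the same probability-one event used in the proposition.
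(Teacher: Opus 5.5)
Your proposal is correct and matches the paper's proof: you rewrite the rank condition as $F_c((1+\epsilon)a_c)\ge 1-\rho$, then use monotonicity of $F_c$ (through the generalized inverse) to get $F_c^{-1}(1-\rho)\le(1+\epsilon)a_c$, which makes the right-hand side of the bound in Proposition~\ref{prop:wishart-rank-fraction} at least $1-\epsilon$. Your extra checks ($a_c>0$ and the boundary case $(1+\epsilon)a_c\ge b_c$) are sound and simply spell out details the paper's one-line argument leaves implicit.
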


\begin{IEEEproof}
The condition in \eqref{eq:mp-rank-fraction-rule} and the monotonicity of \(F_c\) imply that the right-hand side of \eqref{eq:asymptotic-gain-bound} is at least \(1-\epsilon\). The result then follows directly from Proposition~\ref{prop:wishart-rank-fraction}.
\end{IEEEproof}

We next give a numerical example to illustrate how to select the rank for a prescribed gain ratio. Let \(c=0.001\) and set the target loss to \(\epsilon=0.1\). Numerical evaluation of the Marchenko--Pastur CDF gives
\begin{equation*}
a_c\approx0.937754,\qquad
F_c^{-1}(0.8)\approx1.031095,
\end{equation*}
so \(r/N_t=0.2\) makes the Marchenko--Pastur expression in \eqref{eq:asymptotic-gain-bound} approximately \(0.90046\). Equivalently, \eqref{eq:mp-rank-fraction-rule} requires \(r/N_t\) to approach a value no smaller than \(0.19627\) for a \(90\%\) asymptotic certificate. Thus, this idealized example predicts that retaining roughly \(20\%\) of the transmit dimensions is sufficient to capture at least \(90\%\) of the full-curvature certified gain.

\begin{figure*}[t]
\centering
\subfigure[Outer iteration]{\includegraphics[width=0.45\linewidth]{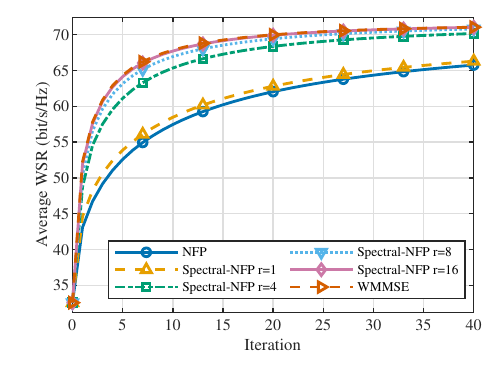}}\hfill
\subfigure[Cumulative update time]{\includegraphics[width=0.45\linewidth]{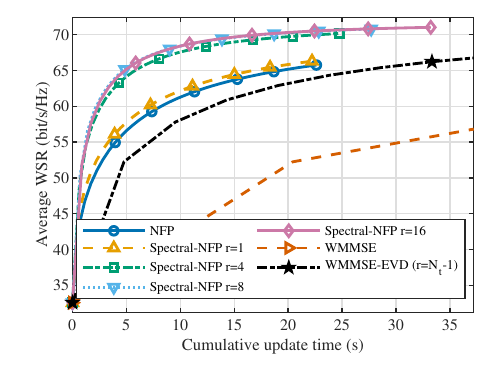}}
\caption{Average WSR on the geometry-based shadowing channels with $N_t=256$ and six users per cell. The time axis focuses on NFP and low-rank Spectral-NFP while retaining the early WMMSE and WMMSE-EVD trajectories.}
\label{fig:geometry-iteration}
\end{figure*}

\begin{figure*}[t]
\centering
\subfigure[Accuracy scaling]{\includegraphics[width=0.45\linewidth]{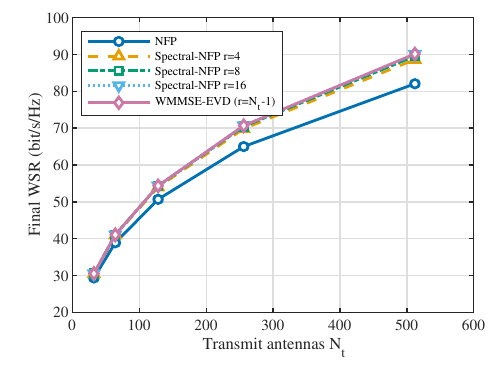}}\hfill
\subfigure[Runtime scaling]{\includegraphics[width=0.45\linewidth]{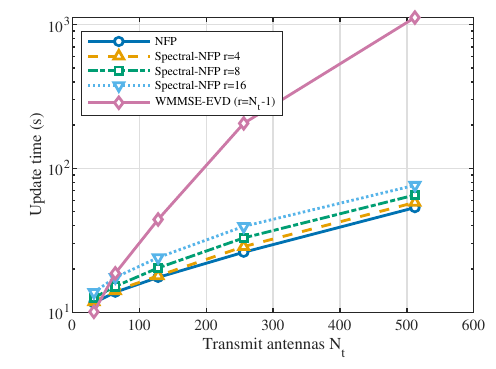}}
\caption{Antenna scaling with six users per cell. (a) Final WSR. (b) Median low-rank runtime and single-run WMMSE-EVD runtime.}
\label{fig:antenna-scaling}
\end{figure*}

\begin{figure*}[t]
\centering
\subfigure[Per-user WSR]{\includegraphics[width=0.45\linewidth]{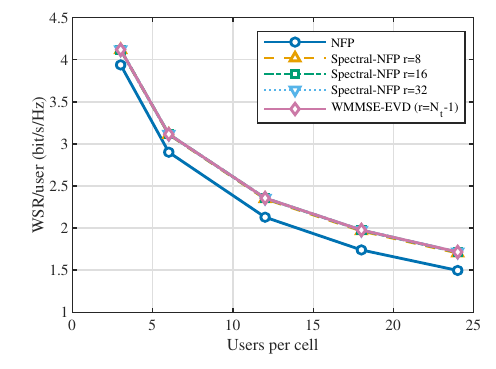}}\hfill
\subfigure[Runtime]{\includegraphics[width=0.45\linewidth]{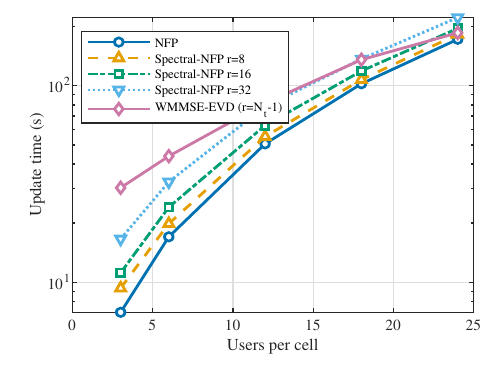}}
\caption{User-load scaling at $N_t=128$. (a) Final WSR per user. (b) Median NFP and Spectral-NFP runtime and single-run WMMSE-EVD runtime.}
\label{fig:user-load-scaling}
\end{figure*}

The required rank fraction depends on the dimension ratio and the target gain ratio. In this example, \(c=0.001\) corresponds to \(M=1000N_t\), so the Wishart spectrum is strongly concentrated. Section~\ref{numerical-results} examines the performance of low-rank Spectral-NFP on geometry-based multicell channels.

\section{Numerical Results}\label{numerical-results}

This section compares Spectral-NFP with NFP and WMMSE in terms of WSR and running time, antenna and user-load scaling, SNR robustness, and the theoretical gain bound. WMMSE-EVD implements the same WMMSE transmit update using a full EVD followed by scalar bisection in the eigenbasis, avoiding repeated matrix inversions.

\subsection{Experimental Setup and Channel Scenarios}\label{experimental-setup-and-channel-scenarios}

The main experiments consider a seven-cell wrapped-around hexagonal network following the simulation model in~\cite{ShenEtAl2024AcceleratingQT}. Each cell is divided into three sectors, and adjacent BSs are separated by 800 m. Users are independently placed in their serving cells subject to a minimum BS--user distance of 300 m. The wrap-around construction provides every cell with the same interference geometry and avoids privileging the center cell. The distance-dependent path loss is
\begin{equation}
\operatorname{PL}(q_{\ell k,i})
=15.3+37.6\log_{10}(q_{\ell k,i})+\xi\quad\text{dB},
\end{equation}
where \(q_{\ell k,i}\) is the distance in meters from BS \(i\) to user \((\ell,k)\), and \(\xi\sim\mathcal N(0,8^2)\) models shadowing. The small-scale coefficients are modeled with independent phases and are scaled by the corresponding large-scale attenuation. Every user has \(N_r=4\) receive antennas. The main convergence comparison uses \(N_t=256\) and six users per cell. The scaling experiments vary \(N_t\in\{32,64,128,256,512\}\), the user load from 3 to 24 users per cell, and the SNR from 70 to 120 dB. Within each scaling sweep, nested antenna or user subsets are used so that the compared dimensions share the same underlying propagation realizations.

Each reported setting averages the WSR over 100 independent channel realizations. Unless otherwise stated, each user is assigned two data streams, all user weights are equal, the per-BS power budget is 100, and the noise variance is \(10^{-8}\). All algorithms start from the same RZF beamformers and run for 40 outer iterations with a minibatch size of five. We report median running times over three runs for NFP and low-rank Spectral-NFP, and single-run times for the WMMSE baselines.

\subsection{Main Geometry-Based Multicell Results}\label{main-geometry-based-multicell-results}

Fig.~\ref{fig:geometry-iteration}(a) compares the WSR trajectories at \(N_t=256\). After five outer iterations, NFP attains 81.744\% of the WMMSE WSR, whereas Spectral-NFP with \(r=1,4,8,16\) attains 83.742\%, 94.999\%, 98.188\%, and 99.698\%, respectively. These ranks retain only 0.391\%, 1.563\%, 3.125\%, and 6.25\% of the transmit dimension.

Fig.~\ref{fig:geometry-iteration}(b) shows the corresponding time comparison. The cumulative times for five iterations are 11.9\% of the WMMSE-EVD time for NFP and 11.7\%, 13.0\%, 14.6\%, and 17.6\% for the four Spectral-NFP ranks. Thus, low-rank updates substantially improve early WSR with only a modest additional cost over NFP.

\subsection{Scaling With Antenna Dimension and User Load}\label{scaling-with-antenna-dimension-and-user-load}

To study the effect of the transmit antenna dimension, Fig.~\ref{fig:antenna-scaling}(a) compares the final WSR as \(N_t\) grows from 32 to 512. Across this range, \(r=4\), \(r=8\), and \(r=16\) achieve at least 98.326\%, 99.354\%, and 99.921\%, respectively, of the WMMSE WSR. At \(N_t=512\), ranks 8 and 16 retain only 1.563\% and 3.125\% of the transmit dimension while attaining 99.354\% and 99.921\% of the WMMSE WSR. Thus, fixed low ranks remain close to WMMSE as the antenna dimension increases.

Fig.~\ref{fig:antenna-scaling}(b) compares the corresponding running times. Full EVD is faster at \(N_t=32\), and its cost is comparable to the low-rank updates at \(N_t=64\). From \(N_t=128\) onward, however, the WMMSE-EVD time increases much faster with the antenna dimension. At \(N_t=512\), ranks 8 and 16 use only 5.84\% and 6.83\% of the WMMSE-EVD time while attaining 99.354\% and 99.921\% of the WMMSE final WSR, respectively. The computational advantage of the low-rank updates therefore becomes pronounced for larger antenna arrays.

\begin{figure*}[!t]
\centering
\begin{minipage}[t]{0.48\textwidth}
\centering
\includegraphics[width=0.90\linewidth]{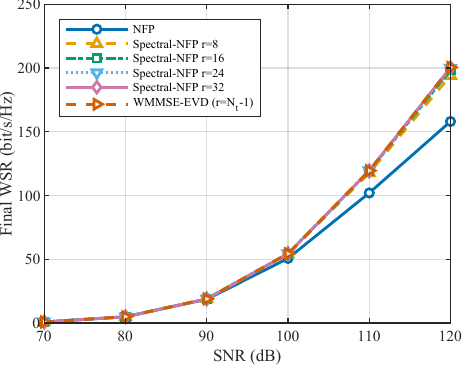}
\caption{Final WSR versus SNR at $N_t=128$ with six users per cell.}
\label{fig:snr-robustness}
\end{minipage}\hfill
\begin{minipage}[t]{0.48\textwidth}
\centering
\includegraphics[width=0.90\linewidth]{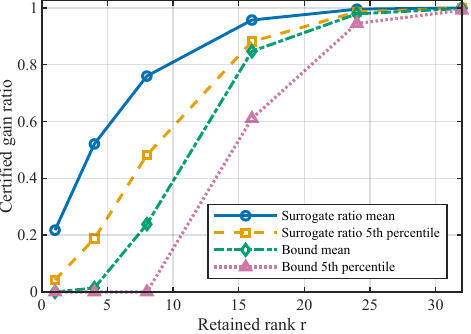}
\caption{Mean and empirical 5th percentile of the surrogate-gain ratio \(\mathcal G_{S,r}/\mathcal G_W\) and its theoretical lower bound, evaluated at common current beamformers.}
\label{fig:certificate}
\end{minipage}
\end{figure*}

Fig.~\ref{fig:user-load-scaling}(a) varies the number of users per cell at fixed \(N_t=128\) and reports the final WSR per user after 40 iterations. The WSR per user decreases as more users share the same cell power. More importantly, the final NFP WSR decreases from 95.626\% to 87.228\% of the WMMSE-EVD WSR as the number of users per cell increases from 3 to 24, whereas Spectral-NFP with \(r=8\) remains between 99.275\% and 99.978\%. Spectral-NFP with \(r=16\) remains above 99.857\% throughout the sweep. Thus, the advantage of retaining directional curvature becomes more pronounced as the user load increases.

Fig.~\ref{fig:user-load-scaling}(b) shows how the running time changes with the number of users per cell. Increasing the number of users raises the partial spectral and beamformer-reconstruction work, whereas the dimension of the WMMSE-EVD factorization remains fixed. Low-rank Spectral-NFP is faster at light and moderate loads. At 24 users per cell, Spectral-NFP with \(r=16\) has a runtime comparable to WMMSE-EVD, with a measured time ratio of 1.05.

\subsection{Robustness Across Operating SNR}\label{robustness-across-operating-snr}

We next fix \(N_t=128\) and six users per cell and vary the per-BS power as \(P=\sigma^2 10^{\mathrm{SNR}/10}\) for SNR values from 70 to 120 dB. Here, SNR is defined as the ratio of the per-BS transmit power to the noise variance, expressed in dB; it does not include path loss or interference.

Fig.~\ref{fig:snr-robustness} reports the final WSR after 40 iterations. At 70--90 dB, NFP already attains between 98.253\% and 99.974\% of the WMMSE-EVD WSR, and Spectral-NFP with \(r=8\) attains at least 99.998\%. The separation becomes clearer as the transmit power increases. At 100 dB, NFP and Spectral-NFP with \(r=8\) attain 93.229\% and 99.895\% of the WMMSE-EVD WSR, respectively. At 120 dB, the corresponding ratios are 78.830\% for NFP and 96.820\%, 98.974\%, 99.707\%, and 99.946\% for Spectral-NFP with \(r=8,16,24,\) and 32. Thus, more curvature directions become useful at higher transmit power, while moderate ranks remain close to WMMSE-EVD throughout the tested range.

\subsection{Certified-Gain Validation}\label{certified-gain-validation}

We evaluate the gain ratio and the lower bound in Theorem~\ref{thm:relative-certified-gain} at common current beamformers for the \(N_t=128\) scenario. The evaluation uses iterations \(0,1,5,10,20,\) and 40, with the WMMSE update computed by EVD.

Fig.~\ref{fig:certificate} reports the optimized-surrogate gain ratio
\(\mathcal G_{S,r}/\mathcal G_W\) and the eta-aware deterministic lower bound
in \eqref{eq:relative-certified-gain}. At \(r=8\), the mean and empirical 5th
percentile of this surrogate-certified ratio are 0.7599 and 0.4820. They
increase to 0.9571 and 0.8812 at \(r=16\), and to 0.9953 and 0.9860 at
\(r=24\). The corresponding mean lower bounds are 0.2385, 0.8469, and 0.9788.
The observed surrogate-gain ratios satisfy the theoretical lower bound at all evaluated points, within numerical precision.

\section{Conclusion}\label{conclusion}

This paper proposed Spectral-NFP as a rank-controlled refinement of NFP for multicell MIMO WSR maximization. Spectral-NFP preserves the leading eigenpairs of the WMMSE/FP transmit curvature and upper-bounds only the remaining eigenvalues. The retained rank therefore controls the transition from the scaled-identity NFP update at \(r=0\) to the exact WMMSE/FP transmit update at \(r=N_t-1\). The resulting low-rank structure avoids a full curvature decomposition while retaining a closed-form beamformer reconstruction and a scalar power-multiplier search.

We derived a deterministic lower bound on the ratio between the one-step transmit-objective gains of Spectral-NFP and WMMSE. This bound is determined by the curvature spectrum and increases with the retained rank. Under an idealized Wishart curvature model, we further obtained a finite-dimensional probability guarantee and a large-system rank-selection rule. We also showed that, for a fixed surrogate curvature matrix, Spectral-NFP is equivalent to Euclidean projected-gradient ascent after a linear coordinate transformation, which motivates the Nesterov-like extension.

Across the tested antenna dimensions from 32 to 512, ranks 8 and 16 attain at least 99.354\% and 99.921\% of the WMMSE final WSR, respectively. At \(N_t=512\), their measured update times are only 5.84\% and 6.83\% of the WMMSE-EVD update time. The experiments further show that higher user loads and transmit powers favor retaining more curvature directions.

Future work will explore adaptive rank selection based on the curvature spectrum.

\bibliographystyle{IEEEtran}
\bibliography{IEEEabrv,references}

\end{document}